%
\documentclass[runningheads]{llncs}
\usepackage{fullpage}
\usepackage{hyperref}
\usepackage[T1]{fontenc}
\usepackage{algorithm} 
\usepackage{jybyPython} 
\providecommand{\etal}{{\em et al.}}

\providecommand{\idtt}[1]{\ensuremath{\mathtt{#1}}}

%
\usepackage{graphicx}
%
\usepackage{version}
\excludeversion{INUTILE}
\includeversion{LONG}\excludeversion{SHORT}\includeversion{PEDAGOGICALEXAMPLE}
\excludeversion{TODO}
\excludeversion{PYTHON}
\excludeversion{CONDITIONAL}

\begin{document}
\title{Indexed Dynamic Programming to boost \\ Edit Distance and LCSS Computation\thanks{Supported by project Fondecyt Regular no 1170366 from Conicyt.}}
\titlerunning{Index to Boost Edit Distance and LCSS computation}
\author{J\'er\'emy Barbay, Andr{\'e}s Olivares}
\authorrunning{J. Barbay, A. Olivares}   
%
\tocauthor{J\'er\'emy Barbay (Universidad de Chile), Andr{\'e}s Olivares (Universidad de Chile)}
\institute{
  Departamento de Ciencias de la Computaci{\'o}n, 
  University of Chile, 
\email{jeremy@barbay.cl, aolivare@dcc.uchile.cl}
}

\maketitle              
%
\begin{abstract}
There are efficient dynamic programming solutions to the computation of the Edit Distance from $S\in[1..\sigma]^n$ to $T\in[1..\sigma]^m$, for many natural subsets of edit operations, typically in time within $O(nm)$ in the worst-case over strings of respective lengths $n$ and $m$ (which is likely to be optimal), and in time within $O(n{+}m)$ in some special cases (e.g. disjoint alphabets).  We describe how indexing the strings (in linear time), and using such an index to refine the recurrence formulas underlying the dynamic programs, yield faster algorithms in a variety of models, on a continuum of classes of instances of intermediate difficulty between the worst and the best case, thus refining the analysis beyond the worst case analysis.  As a side result, we describe similar properties for the computation of the Longest Common Sub Sequence $\idtt{LCSS}(S,T)$ between $S$ and $T$, since it is a particular case of Edit Distance, and we discuss the application of similar algorithmic and analysis techniques for other dynamic programming solutions.  More formally, we propose a parameterized analysis of the computational complexity of the Edit Distance for various set of operators and of the Longest Common Sub Sequence in function of the area of the dynamic program matrix relevant to the computation.
\end{abstract}

\providecommand{\alphabetSize}{\sigma}

\section{Introduction}

Given a set of edition operators on strings, a source string $S\in[1..\sigma]^n$ and a target string $T\in[1..\sigma]^m$ of respective lengths $n$ and $m$  on the alphabet $[1..\alphabetSize]$, the {\sc Edit Distance} is the minimum number of such operations required to transform the string~$S$ into the string~$T$. Introduced in 1974 by Wagner and Fischer~\cite{1974-JACM-TheStringToStringCorrectionProblem-WagnerFischer}, such computation is a fundamental problem in Computer Science, with a wide range of applications, from text processing and information retrieval to computational biology.
The typical edition distance between two strings is defined by the minimum number of \texttt{insertions}, \texttt{deletions} (in both cases, of a character at an arbitrary position of $S$) and \texttt{replacement} (of one character of $S$ by some other) needed to transform the string $S$ into $T$. Many generalizations have been defined in the literature, including weighted costs for the edit operations, and different sets of edit operations -- the standard set is \{\texttt{insertion}, \texttt{deletion}, \texttt{replacement}\}.

Each distinct set of correction operators yields a distinct correction distance on strings  (see Figure \ref{fig:overview} for a summary).
\begin{figure}[t]
\centering
\begin{SHORT}
\begin{tabular}{c|c|c|c}
            & $(n,m)$-Worst                                                                       & \multicolumn{2}{c}{Finer Results}                                                                                 \\ 
  Operators & Case Complexity                                                                     & Distance                                                                                   & Parikh vectors       \\ 
\hline
Swap        & $O(n^2)$         \cite{2013-Master-TheBinaryStringtoStringCorrectionProblem-Spreen} &  $O(n(1+\lg d/n)\lg n)$ & DNA \\

\hline
Delete, Insert                & $O(nm)$          \cite{2000-SPIRE-ASurveyOfLongestCommonSubsequenceAlgorithms-BergrothHakonenRaita}  & $O(d^2)$ & $O(\sum n_\alpha m_\alpha\lg nm)$\\
Delete, Replace               & $O(nm)$          \cite{1975-STOC-OnTheComplexityOfTheExtendedStringToStringCorrectionProblem-Wagner}& $O(d^2)$ & $O(\sum n_\alpha m_\alpha\lg nm)$\\
Delete, Swap                & NP-complete   \cite{1974-JACM-TheStringToStringCorrectionProblem-WagnerFischer}                      & $O({1.6181}^d m)$  \cite{2011-DO-ChargeAndReduceAFixedParameterAlgorithmForStringToStringCorrection-AbukhzamFernauLangstonLeeculturaStege} & $O(\sigma^2nm\gamma^{\sigma-1})$ \cite{2015-SPIRE-AdaptiveComputationOfTheSwapInsertCorrectionDistance-BarbayPerez} \\
Replace, Swap                 & $O(nm)$          \cite{1975-STOC-OnTheComplexityOfTheExtendedStringToStringCorrectionProblem-Wagner} & $O(d^2)$&  $O(\sum n_\alpha m_\alpha\lg nm)$\\
\hline
Delete, Insert, Replace       & $O(nm)$          \cite{2000-SPIRE-ASurveyOfLongestCommonSubsequenceAlgorithms-BergrothHakonenRaita}  & $O(d^2)$&  $O(\sum n_\alpha m_\alpha\lg nm)$ \\
Delete, Insert, Swap          & $O(nm)$          \cite{1974-JACM-TheStringToStringCorrectionProblem-WagnerFischer}                   & $O(d^2)$&  \\
Delete, Replace, Swap         & $O(nm)$          \cite{1974-JACM-TheStringToStringCorrectionProblem-WagnerFischer}                   & $O(d^2)$&  \\
Insert, Replace, Swap         & $O(nm)$          \cite{1974-JACM-TheStringToStringCorrectionProblem-WagnerFischer}                   & $O(d^2)$& \\
\hline
Delete, Insert,  & $O(nm)$          \cite{2000-SPIRE-ASurveyOfLongestCommonSubsequenceAlgorithms-BergrothHakonenRaita}  & $O(d^2)$& \\
Replace, Swap & & & \\
\end{tabular}
\end{SHORT}
\begin{LONG}
\begin{tabular}{c|c|c|c}
            & $(n,m)$-Worst                                                                       & \multicolumn{2}{c}{Finer Results}                                                                                 \\ 
  Operators & Case Complexity                                                                     & Distance                                                                                   & Parikh vectors       \\ 
\hline
Delete    & $O(m)$    \cite{2013-Master-TheBinaryStringtoStringCorrectionProblem-Spreen}        &                                                                                                                   \\
Insert    & $O(n)$     \cite{2013-Master-TheBinaryStringtoStringCorrectionProblem-Spreen}       &                                                                                                                   \\
Replace   & $O(n)$           \cite{2013-Master-TheBinaryStringtoStringCorrectionProblem-Spreen} &                                                                                                                   \\
Swap        & $O(n^2)$         \cite{2013-Master-TheBinaryStringtoStringCorrectionProblem-Spreen} & $O(n(1+\lg d/n)\lg\lg\sigma)$ & DNA \\
            &                                                                                     & $O(n(1+\lg d/n)\lg n)$        &                      \\

\hline
Delete, Insert                & $O(nm)$          \cite{2000-SPIRE-ASurveyOfLongestCommonSubsequenceAlgorithms-BergrothHakonenRaita}  & $O(d^2)$ &                                                                                                                                                                                                                                                                  \\
Delete, Replace               & $O(nm)$          \cite{1975-STOC-OnTheComplexityOfTheExtendedStringToStringCorrectionProblem-Wagner}& $O(d^2)$ &                                                                                                                                                                                                                                                                  \\
= Insert, Replace               & $O(nm)$         \cite{1975-STOC-OnTheComplexityOfTheExtendedStringToStringCorrectionProblem-Wagner} & $O(d^2)$ &                                                                                                                                                                                                                                                                  \\
Delete, Swap                & NP-complete   \cite{1974-JACM-TheStringToStringCorrectionProblem-WagnerFischer}                      & $O({1.6181}^d m)$  \cite{2011-DO-ChargeAndReduceAFixedParameterAlgorithmForStringToStringCorrection-AbukhzamFernauLangstonLeeculturaStege} & $O(\sigma^2nm\gamma^{\sigma-1})$ \cite{2015-SPIRE-AdaptiveComputationOfTheSwapInsertCorrectionDistance-BarbayPerez} \\
= Insert, Swap                & NP-complete   \cite{1974-JACM-TheStringToStringCorrectionProblem-WagnerFischer}                      & $O({1.6181}^d m)$  \cite{2011-DO-ChargeAndReduceAFixedParameterAlgorithmForStringToStringCorrection-AbukhzamFernauLangstonLeeculturaStege} & $O(\sigma^2nm\gamma^{\sigma-1})$ \cite{2015-SPIRE-AdaptiveComputationOfTheSwapInsertCorrectionDistance-BarbayPerez} \\
Replace, Swap                 & $O(nm)$          \cite{1975-STOC-OnTheComplexityOfTheExtendedStringToStringCorrectionProblem-Wagner} & $O(d^2)$& \\
\hline
Delete, Insert, Replace       & $O(nm)$          \cite{2000-SPIRE-ASurveyOfLongestCommonSubsequenceAlgorithms-BergrothHakonenRaita}  & $O(d^2)$& \\
Delete, Insert, Swap          & $O(nm)$          \cite{1974-JACM-TheStringToStringCorrectionProblem-WagnerFischer}                   & $O(d^2)$& \\
Delete, Replace, Swap         & $O(nm)$          \cite{1974-JACM-TheStringToStringCorrectionProblem-WagnerFischer}                   & $O(d^2)$& \\
Insert, Replace, Swap         & $O(nm)$          \cite{1974-JACM-TheStringToStringCorrectionProblem-WagnerFischer}                   & $O(d^2)$& \\
\hline
Delete, Insert,  & $O(nm)$          \cite{2000-SPIRE-ASurveyOfLongestCommonSubsequenceAlgorithms-BergrothHakonenRaita}  & $O(d^2)$& \\
Replace, Swap & & & \\
\end{tabular}
\end{LONG}
\begin{minipage}{.9\linewidth}
\caption{Summary of  results for various combinations of operators from the basic set \{\texttt{Insert}, \texttt{Delete}, \texttt{Replace}, \texttt{Swap}\}.  The column labeled ``$(n,m)$-Worst Case Complexity'' presents results in the worst case over instances of fixed sizes $n$ and $m$, while the columns labeled ``Finer Results'' present results where the analysis was refined by various parameters: the distance $d$, the size $\sigma$ of the alphabet, and some form of imbalance $\gamma = \max_{\alpha\in[1..\sigma]}\min\{n_\alpha,m_\alpha{-}n_\alpha\}$ between the \texttt{Parikh vectors} of $S$ and $T$. For brevity, the only distance based on a single operator presented is the \textsc{Swap Edit Distance}, as the computation of the others is always linear in the size of the input. }
\end{minipage}
\label{fig:overview}
\end{figure}
For instance, Wagner and Fischer~\cite{1974-JACM-TheStringToStringCorrectionProblem-WagnerFischer} showed that for the three following operations, the \texttt{insertion} of a symbol at some arbitrary position, the \texttt{deletion} of a symbol at some arbitrary position, and the \texttt{replacement} of a symbol at some arbitrary position, the \textsc{Edit Distance} can be computed in time within $O(nm)$ and space within $O(n{+}m)$ using traditional dynamic programming techniques. 
As another variant of interest, Wagner and Lowrance~\cite{1975-JACM-AnExtensionOfTheStringToStringCorrectionProblem-WagnerLowrance} introduced the \texttt{Swap} operator (\texttt{S}), which exchanges the positions of two contiguous symbols.
When considering only the \texttt{Swap} operator, one basically searches for the permutation transforming the source string $S$ into the target string $T$: some adaptive sorting technique yields a minor improvement on the computation of the \textsc{Swap Edit Distance} (see appendix~\ref{sec:swap}).
For two of the newly defined distances, the \textsc{Insert Swap Edit Distance} and the \textsc{Delete Swap Edit Distance} (equivalent by symmetry), the best known algorithms take time exponential in the input size~\cite{2015-SPIRE-AdaptiveComputationOfTheSwapInsertCorrectionDistance-BarbayPerez,2018-TALG-AdaptiveComputationOfTheSwapInsertCorrectionDistance-BarbayPerez}, which is likely to be optimal~\cite{1974-JACM-TheStringToStringCorrectionProblem-WagnerFischer}.
The \textsc{Edit Distance} itself is linked to many other problems: for instance, given the two same strings $S\in[1..\alphabetSize]^n$ and $T\in[1..\alphabetSize]^m$, the computation of the \textsc{Longest Common Sub-Sequence} (LCSS) $L$ between $S$ and $T$ is equivalent to the computation of the \textsc{Delete Insert Edit Distance} $d$, as the symbols deleted from $S$ and inserted from $T$ in order to ``edit'' $S$ into $T$ are exactly the same as the symbols deleted from $S$ and $T$ in order to produce $L$. Hence, the LCSS between $S$ and $T$ can be computed in time within $O(nm)$ and space within $O(n{+}m)$ using traditional dynamic programming techniques.

Most of these computational complexities are likely to be optimal in the worst case over instances of size $(n,m)$: the algorithms computing the three basic distances
\begin{LONG}
(\textsc{Insert Edit Distance}, \textsc{Delete Edit Distance} and \textsc{Replace Edit Distance})
\end{LONG}
in linear time are optimal as any algorithm must read the whole strings; the \textsc{Insert Swap Edit Distance} and its symmetric the \textsc{Delete Swap Edit Distance} are NP-hard to compute~\cite{1975-STOC-OnTheComplexityOfTheExtendedStringToStringCorrectionProblem-Wagner}; and in 2015 Backurs and Indyk~\cite{2015-STOC-EditDistanceCannotBeComputedInStronglySubquadraticTimeUnlessSETHIsFalse-BackursIndyk} showed that the $O(n^2)$ upper bound for the computation of the \textsc{Delete Insert  Replace Edit Distance} is optimal unless the \emph{Strong Exponential Time Hypothesis} (SETH) is false.  

More recently, Barbay and P{\'{e}}rez{-}Lantero~\cite{2015-SPIRE-AdaptiveComputationOfTheSwapInsertCorrectionDistance-BarbayPerez,2018-TALG-AdaptiveComputationOfTheSwapInsertCorrectionDistance-BarbayPerez}, complementing Meister's previous results~\cite{2015-TCS-UsingSwapsAndDeletesToMakeStringsMatch-Meister} by the use of an index supporting the operators \texttt{rank} and \texttt{select} on strings, described an algorithm computing this distance in time within $O(\sigma^2nm\gamma^{\sigma-1})$ in the worst case over instances where $\sigma,n,m$ and $\gamma$ are fixed, where $\gamma= \max_{\alpha\in[1..\sigma]}\min\{n_\alpha,m_\alpha{-}n_\alpha\}$ measures a form of imbalance between the frequency distributions of each string.

\paragraph{Hypothesis:}
Given this situation,  is it possible to \textbf{take advantage of} indexing techniques supporting \textbf{rank and select} in order \textbf{to speed up the computation of} other \textbf{edit distances}?
Can a similar analysis to that of Barbay and P\'erez-Lantero~\cite{2015-SPIRE-AdaptiveComputationOfTheSwapInsertCorrectionDistance-BarbayPerez,2018-TALG-AdaptiveComputationOfTheSwapInsertCorrectionDistance-BarbayPerez} be applied to other edit distances? 
\textbf{Are there instances for which the edit distance is easier to compute}, and \textbf{do such instances occur in real applications} of the computation of the edit distance?

\paragraph{Our Results:}
We answer all those questions positively, and describe general techniques to refine the analysis of dynamic programs beyond the traditional analysis in the worst case over input of fixed size. More specifically, we analyze the computational cost of four \textsc{Edit Distances} using various \texttt{rank} and \texttt{select} text indices, in function of the \texttt{Parikh vector}~\cite{2017-wikipedia-Parikhtheorem} of the source $S$ and target $T$ strings. As a side result, this yields similar properties for the computation of the \textsc{Longest Common Sub Sequence} $\idtt{LCSS}(S,T)$ between $S$ and $T$\begin{LONG}, as it can be deduced from the \textsc{Delete Insert Edit Distance} ($\idtt{LCSS}(S,T)=|S|+|T|-2d_{DI}(S,T)$)\end{LONG}, and definitions and techniques which can be applied to other dynamic programs.
After defining formally the notion of \texttt{Parikh's vector} and various index data structures supporting \texttt{rank} and \texttt{select} on strings in Section~\ref{sec:background}, we describe the algorithms taking advantage of such techniques in Section~\ref{sec:algorithms}: for the \textsc{Longest Common Sub Sequence} and \textsc{Delete-Insert Edit Distance} (Section~\ref{sec:DIUpper}), the \textsc{Delete Insert Replace Edit Distance} (Section~\ref{sec:DIRUpper}), and finally for the \textsc{Delete-Replace Edit Distance} and its dual the \textsc{Insert-Replace Edit Distance} (Section~\ref{sec:DRUpper}).
We describe some preliminary experiments and their results, which seem to indicate that those instances are not totally artificial and occur naturally in practical applications in Section~\ref{sec:experiments}.
We conclude in Section~\ref{sec:discussion} with a discussion of other potential refinement of the analysis, and the extension of our results to other \textsc{Edit Distances}.

\section{Preliminaries}
\label{sec:background}

Before describing our proposed algorithms to compute various \textsc{Edit Distances}, we describe formally in Section~\ref{sec:parikhVector} the notion of \texttt{Parikh vector} which is essential to our analysis technique; and in Section~\ref{sec:rankSelectImplementations} two key implementations of indices supporting the \texttt{rank} and \texttt{select} operators on strings.

\subsection{Parikh vector}
\label{sec:parikhVector}

Given positive integers $\sigma$ and $n$, a string $S\in[1..\sigma]^n$, and the integers $n_1,\ldots,n_\sigma$ such that $n_\alpha$ denotes the number of occurrences of the letter $\alpha\in[1..\sigma]$ in the string $S$, the \texttt{Parikh vector} of $S$ is defined~\cite{2017-wikipedia-Parikhtheorem} as
$p(S)=(n_1,\ldots,n_\sigma).$

Barbay and P{\'{e}}rez{-}Lantero~\cite{2015-SPIRE-AdaptiveComputationOfTheSwapInsertCorrectionDistance-BarbayPerez} refined the analysis of the \textsc{Insert Swap Edit Distance} from a string $S\in[1..\sigma]^n$ to a string $T\in[1..\sigma]^m$ via a function of the \texttt{Parikh vectors} $(n_1,\ldots,n_\sigma)$ of $S$ and $(m_1,\ldots,m_\sigma)$ of $T$, the local imbalance $\gamma_{\alpha}=\min\{n_{\alpha},m_{\alpha}-n_{\alpha}\}$ for each symbol $\alpha\in[1..\sigma]$, projected to a global measure of imbalance, $\gamma=\max_{\alpha\in[1..\sigma]} \gamma_\alpha$.
In the worst case among instances of fixed \texttt{Parikh vector}, they describe an algorithm to compute the \textsc{Insert Swap Edit Distance} in time within 
\begin{LONG}
$$
 O\left(
  dn 
  +d^2n
  \cdot 
  \left(\sum_{\alpha=1}^\sigma(m_{\alpha}-\gamma_{\alpha})\right)
  \cdot
    \prod_{\alpha\in\overline{[1..\sigma]}}(\gamma_\alpha+1)
  \right),
$$
where 
$\overline{[1..\sigma]}=\{\alpha\in[1..\sigma]: \gamma_{\alpha}>0\}$ if $\Pi_{\alpha\in[1..\sigma]}\gamma_\alpha=0$, and
$\overline{[1..\sigma]}=[1..\sigma]\setminus \{\arg\min_{\alpha\in[1..\sigma]}\gamma_{\alpha}\}$ otherwise.
This formula simplifies to within
\end{LONG}
$O(\sigma^2nm\gamma^{\sigma-1})$ in the worst case over instances where $\sigma,n,m$ and $\gamma$ are fixed.

Such a vector is essential to the fine analysis of dynamic programs for computing \textsc{Edit Distances} when using operators whose running time depends on the number of occurrence of each symbol, such as for the \texttt{rank} and \texttt{select} operators described in the next section.

\subsection{Rank and Select in Strings}
\label{sec:rankSelectImplementations}

\begin{LONG}
For every string $X\in\{S,L\}$ and integer $i\in[1..|X|]$, $X[i]$ denotes the $i$-th symbol of $X$ from left to right. For every pair of integers $i,j\in[1..|X|]$ such that $i\le j$, $X[i..j]$ denotes the substring of $X$ from the $i$-th symbol to the $j$-th symbol, and for every pair of integers $i,j\in[1..|X|]$ such that $j<i$, $X[i..j]$ denotes the empty string.
\end{LONG}

Given a symbol $\alpha\in[1..\sigma]$, an integer $i\in[1..|X|]$ and an integer $k>0$, the operation $rank(X,i,\alpha)$ denotes the number of occurrences of the symbol $\alpha$ in the substring $X[1..i]$, and the operation $select(X,k,\alpha)$ denotes the value $j\in[1..|X|]$ such that the $k$-th occurrence of $\alpha$ in $X$ is precisely at position $j$, if $j$ exists. If $j$ does not exist, then $select(X,k,\alpha)$ is $null$.

A simple way to support the \texttt{rank} and \texttt{select} operators in reasonably good time consists in, for each symbol $\alpha\in[1..\sigma]$, listing all the occurrences of $\alpha$ in a sorted array (called a ``Posting List''~\cite{1999-BOOK-ManagingGigabytes-WittenMoffatBell}): supporting the \texttt{select} operator reduces to a simple access to the sorted array corresponding to the symbol $\alpha$; while supporting the \texttt{rank} operator reduces to a \textsc{Sorted Search} in the same array, which can be simply implemented by a \texttt{Binary Search}, or more efficiently in practice by a \texttt{Doubling Search}~\cite{1976-IPL-AnAlmostOptimalAlgorithmForUnboundedSearching-BentleyYao} in time within $O(q_\alpha\lg(n_\alpha/q_\alpha))$ when supporting $q_\alpha$ monotone queries in a posting list of size $n_\alpha$ (for a given symbol $\alpha\in[1..\sigma]$.
\begin{LONG}
\begin{lemma} 
Given a string $S\in[1..\sigma]^n$ of \texttt{Parikh vector} $(n_1,\ldots,n_\alpha)$, there exists an index using $n+\sigma$ machine words, which can be computed in time linear in the size $n$ of $S$ in order to support the operators \texttt{rank} and \texttt{select} in time within $O(q_\alpha\lg(n_\alpha/q_\alpha))$ in the comparison based decision tree model, when $q_\alpha$ of those queries concern the symbol $\alpha\in[1..\sigma]$.
\end{lemma}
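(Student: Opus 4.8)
The plan is to construct the index explicitly as a collection of $\sigma$ posting lists---one sorted array per symbol---and then account separately for its construction cost, its space, and the amortized query cost. First I would build the index in a single left-to-right scan of $S$: maintain $\sigma$ growing arrays $P_1,\ldots,P_\sigma$, and upon reading $S[i]=\alpha$ append $i$ to $P_\alpha$. Because each position is appended exactly once and each array is populated in increasing order of position, the arrays come out sorted ``for free'' and the whole pass takes time within $O(n)$ (using the standard amortized-doubling trick for array growth, or a two-pass counting scheme that first computes the $n_\alpha$ and then fills preallocated arrays). The space is $\sum_\alpha n_\alpha = n$ machine words for the entries plus $\sigma$ words for the array handles (lengths/pointers), giving the claimed $n+\sigma$ words.

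Next I would verify that the two operators reduce to the advertised primitives. The operation $\idit{select}(S,k,\alpha)$ is a direct indexed access $P_\alpha[k]$ (returning $\idit{null}$ when $k>n_\alpha$), costing constant time per call. The operation $\idit{rank}(S,i,\alpha)$ asks for the number of entries of $P_\alpha$ that are $\le i$, which is a \textsc{Sorted Search} in the array $P_\alpha$. Here the key observation is that the per-symbol query cost is to be charged in aggregate over all $q_\alpha$ queries touching symbol $\alpha$: I would invoke the \texttt{Doubling Search} analysis of Bentley and Yao, which supports a batch of $q_\alpha$ \emph{monotone} queries in a sorted list of length $n_\alpha$ in total time within $O(q_\alpha \lg(n_\alpha/q_\alpha))$ in the comparison-based decision-tree model. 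Summing over symbols then yields the stated bound term by term.

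The main obstacle---and the step needing the most care---is the word ``monotone'' in the hypothesis: the $O(q_\alpha \lg(n_\alpha/q_\alpha))$ bound is \emph{not} the cost of $q_\alpha$ arbitrary independent searches (that would be $O(q_\alpha \lg n_\alpha)$), but rather the cost when the $q_\alpha$ search keys arrive in sorted order, so that each \texttt{Doubling Search} can resume from the finger left by the previous one. I would therefore make explicit that the queries are served in a sweep of increasing $i$ (which is exactly how the dynamic programs of Section~\ref{sec:algorithms} traverse the matrix), so that for each fixed $\alpha$ the subsequence of queries on $\alpha$ is monotone; the gaps $g_1,\ldots,g_{q_\alpha}$ between consecutive answered positions sum to at most $n_\alpha$, and the per-query cost $O(\lg g_j)$ summed under this constraint is maximized when the gaps are equal, giving $\sum_j \lg g_j \le q_\alpha \lg(n_\alpha/q_\alpha)$ by concavity of the logarithm (Jensen). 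I would note that if one does not assume monotonicity, a plain \texttt{Binary Search} still gives the weaker $O(q_\alpha \lg n_\alpha)$, so the refinement hinges entirely on exhibiting the monotone access pattern.

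Finally, I would remark that all three accountings are in the comparison-based decision-tree model as stated: construction and \texttt{select} use only pointer arithmetic and array indexing, while the only comparisons performed are those internal to the \texttt{Doubling Search}, so the comparison count matches the claimed time bound. Assembling the construction cost $O(n)$, the space $n+\sigma$ words, and the aggregate query cost $\sum_\alpha O(q_\alpha \lg(n_\alpha/q_\alpha))$ completes the proof of the lemma.
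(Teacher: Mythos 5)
Your proposal is correct and takes essentially the same approach as the paper: one sorted posting list per symbol, \texttt{select} by direct array access, and \texttt{rank} by \texttt{Doubling Search} amortized over the monotone batch of queries (the paper in fact gives no explicit proof, the lemma being presented as a direct consequence of the preceding description of posting lists). Your explicit treatment of the monotonicity hypothesis---which the surrounding text mentions but the lemma statement itself omits, and without which no comparison-based structure could beat $\Theta(q_\alpha\lg n_\alpha)$---is a sharpening of a point the paper glosses over.
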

\end{LONG}

Golynski~\etal~\cite{2006-SODA-RankSelectOperationsOnLargeAlphabetsAToolForText-GolynskiMunroRao} described a more sophisticated (but asymptotically more efficient) way to support the \texttt{rank} and \texttt{select} operators in the RAM model, via a clever reduction to $Y$-Fast Trees on permutations supporting the operators in time within $O(\lg\lg\sigma)$.
\begin{LONG}
Barbay et al.~\cite{2014-Algorithmica-EfficientFullyCompressedSequenceRepresentations-BarbayClaudeGagieNavarroNekrich} showed that it can be done on a compressed representation of the text.
\begin{lemma} 
Given a string $S\in[1..\sigma]^n$, there exists an index using space within $o(n\lg\sigma)$ bits, which can be computed in time linear in the size $n$ of $S$ in order to support the operators \texttt{rank} and \texttt{select} in time within $O(\lg\lg\sigma)$ in the RAM model.
\end{lemma}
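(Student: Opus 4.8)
The plan is to prove the statement in the \emph{succinct index} model, taking care to build an auxiliary structure of $o(n\lg\sigma)$ bits rather than a full representation of $S$. The point is that $o(n\lg\sigma)$ bits cannot by themselves encode an arbitrary $S\in[1..\sigma]^n$ (which carries up to $n\lg\sigma$ bits of information and is completely determined by its \texttt{rank}/\texttt{select} answers), so the index cannot be self-contained. I would therefore assume $S$ is kept separately as its raw input array, which in the RAM model with word size $\Theta(\lg n)$ allows reading any $S[i]$, or any $\Theta(\lg n/\lg\sigma)$ contiguous symbols packed in a word, in constant time. The index is the auxiliary structure that answers queries by probing $S$ as a black box, and it is exactly this auxiliary structure whose size must be bounded by $o(n\lg\sigma)$ bits.

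The core is the reduction of Golynski \etal~\cite{2006-SODA-RankSelectOperationsOnLargeAlphabetsAToolForText-GolynskiMunroRao}. First I would view the operators through the permutation $\pi$ of $[1..n]$ obtained by listing the positions of $S$ grouped by symbol (all positions of $1$ in increasing order, then all of $2$, and so on), together with the cumulative-count array $C[1..\sigma{+}1]$ with $C[\alpha]=\sum_{\beta<\alpha}n_\beta$, which occupies only $O(\sigma\lg n)$ bits. Then $select(S,k,\alpha)=\pi(C[\alpha]{+}k)$, while $rank(S,i,\alpha)$ is a predecessor query for $i$ inside the increasing block $\pi(C[\alpha]{+}1),\ldots,\pi(C[\alpha{+}1])$. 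To obtain the $O(\lg\lg\sigma)$ bound, I would partition $[1..n]$ into $\lceil n/\sigma\rceil$ chunks of $\sigma$ consecutive positions, so that every within-chunk predecessor/evaluation query ranges over a universe of size $\sigma$ and is answered in $O(\lg\lg\sigma)$ time by a $Y$-fast trie, the across-chunk contribution being resolved by a sublinear rank/select directory over the $\lceil n/\sigma\rceil$ per-symbol chunk counts.

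The delicate point, and the step I expect to be the main obstacle, is making the \emph{index} itself $o(n\lg\sigma)$ bits rather than the full $n\lg\sigma(1{+}o(1))$-bit structure of Golynski \etal: storing a complete $Y$-fast trie per chunk would retain $n$ keys of $\Theta(\lg\sigma)$ bits, costing $\Theta(n\lg\sigma)$ bits. Instead, following the succinct-index technique of Barbay \etal~\cite{2014-Algorithmica-EfficientFullyCompressedSequenceRepresentations-BarbayClaudeGagieNavarroNekrich}, I would store only \emph{sampled} within-chunk structures, sampling one position out of every $s=\omega(1)$, and recover the remaining per-chunk information at query time by scanning the $O(s)$ symbols around a sample through word-packed, constant-time probes into the raw $S$. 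The verification then splits into (i) a space accounting, showing that the sampled directories, tries and count arrays sum to $n\cdot o(\lg\sigma)=o(n\lg\sigma)$ bits, treating the regimes $\sigma=n^{\Omega(1)}$ and $\sigma=n^{o(1)}$ separately and folding the $O(\sigma\lg n)$ term into the bound; and (ii) a query-time argument, choosing $s=\Theta(\lg n\,\lg\lg\sigma/\lg\sigma)$ so that each sample-plus-probe sequence stays within $O(\lg\lg\sigma)$ word operations while keeping $n/s$ stored keys, hence $o(n\lg\sigma)$ bits. Construction in $O(n)$ time is the easy part: a single left-to-right scan of $S$ yields the \texttt{Parikh} counts and $C$, a bucketed pass emits each symbol's positions already in sorted order, and $Y$-fast tries over sorted keys are built in linear time.
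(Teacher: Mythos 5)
You should first know that the paper contains no proof of this lemma at all: it is stated (in the long version) as a citation, immediately after crediting Golynski et al.~\cite{2006-SODA-RankSelectOperationsOnLargeAlphabetsAToolForText-GolynskiMunroRao} for the $O(\lg\lg\sigma)$ bound and Barbay et al.~\cite{2014-Algorithmica-EfficientFullyCompressedSequenceRepresentations-BarbayClaudeGagieNavarroNekrich} for achieving it ``on a compressed representation of the text''. Your framing of the statement in the succinct-index model (raw $S$ kept aside, auxiliary structure of $o(n\lg\sigma)$ bits) is a defensible reading of the word ``index'', and your recollection of the Golynski reduction --- permutation $\pi$ grouped by symbol, cumulative counts $C$, $select(S,k,\alpha)=\pi(C[\alpha]+k)$, \texttt{rank} as predecessor search, chunks of size $\sigma$ with $Y$-fast tries --- is accurate. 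The problem is the one step you yourself flag as delicate: the sampling-plus-local-scan repair does not work, because the information you discard is not locally recoverable. The keys of the $Y$-fast trie for a symbol $\alpha$ are the positions of the occurrences of $\alpha$, and these are \emph{not contiguous in $S$}: between two consecutive sampled entries of $\alpha$'s posting list there lie $s$ occurrences of $\alpha$ spread over up to $\sigma$ positions of the chunk (or up to $n$ positions globally). So after a predecessor query among the sampled keys, locating the exact occurrence needed for \texttt{rank}, or the $(k\bmod s)$-th next occurrence needed for \texttt{select}, requires scanning not $O(s)$ symbols around a sample but an unbounded stretch of raw $S$; even word-packed, scanning a whole chunk costs $\Theta(\sigma\lg\sigma/\lg n)$ time, which is $\omega(\lg\lg\sigma)$ precisely in the hard regime $\lg\sigma=\Theta(\lg n)$. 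Your choice $s=\Theta(\lg n\,\lg\lg\sigma/\lg\sigma)$ bounds the cost of reading $s$ \emph{consecutive} symbols, but consecutiveness is exactly what fails for a fixed sparse symbol; this is the same reason the analogous sampling trick is sound for partial sums and dense bitvectors but not for large-alphabet \texttt{rank}/\texttt{select}.

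This gap is not incidental: storing the per-chunk permutations explicitly is where Golynski et al.'s $n\lg\sigma$ main term comes from, and avoiding it in a pure index model (probing raw $S$ as a black box) is known to be genuinely harder --- succinct indexes in that model pay extra factors over $O(\lg\lg\sigma)$, which is why Golynski et al.\ store the string in permuted form and why Barbay et al.\ obtain the improved space not as a raw-string-plus-index decomposition but as a self-contained \emph{compressed representation} (the $o(n\lg\sigma)$ of the lemma being redundancy over compressed space, built from machinery such as compressed permutations and multiary wavelet-tree-like partitions that cannot be re-derived at query time by $O(s)$ probes into $S$). Two smaller points: folding the $O(\sigma\lg n)$ term for $C$ into $o(n\lg\sigma)$ fails at $\sigma=\Theta(n)$, so you need an explicit restriction to the effective alphabet; and linear-time construction of $Y$-fast tries involves perfect hashing, so ``linear time'' should be qualified as expected or the construction replaced by a deterministic alternative. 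The honest repair of your write-up is to do what the paper does: invoke the cited constructions for the space and query bounds rather than re-derive them by sampling.
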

\end{LONG}
We describe how to use those techniques to speed up the computation of various \textsc{Edit Distances} in the following sections.

\section{Adaptive Dynamic Programs}
\label{sec:algorithms}

For each of the problems considered, we describe how to compute a subset of the values computed by classical dynamic programs.  We start with the computation of the \textsc{Longest Common Sub Sequence} (LCSS) and the \textsc{Delete Insert (DI) Edit Distance} (Section~\ref{sec:DIUpper}) because it is the simplest; extend its results to the computation of the \textsc{Levenshtein Edit Distance} (Section~\ref{sec:DIRUpper}); and project those to the computation of the \textsc{Delete Replace (DR) Edit Distance} and its symmetric \textsc{Insert Replace (IR) Edit Distance} (Section~\ref{sec:DRUpper}). 

\subsection{LCSS and DI-Edit Distance}
\label{sec:DIUpper}

The \textsc{Delete Insert Edit Distance} is a classical problem in Stringology~\cite{2000-SPIRE-ASurveyOfLongestCommonSubsequenceAlgorithms-BergrothHakonenRaita}, if only as a variant of the \textsc{Longest Common Sub Sequence} problem.  It is classically computed using dynamic programming: we describe the classical solution first, which we then refine\begin{PEDAGOGICALEXAMPLE} in a simplistic way, as a pedagogical introduction to a more sophisticated refinement\end{PEDAGOGICALEXAMPLE}.

\subsubsection{Classical solution:}
\label{sec:DeleteInsertClassical}

Given two strings $S\in[1..\sigma]^n$ and $T\in[1..\sigma]^m$, we note $d_{DI}(n,m)$ the \textsc{Delete Insert Edit Distance} from $S$ to $T$. If the last symbols of $S$ and $T$ match, the edit distance is the same as the edit distance between the prefixes of respective lengths $n-1$ and $m-1$ of $S$ and $T$. Otherwise, the edit distance is the minimum between the edit distance when inserting a copy of the last symbol of $T$ in $S$ (i.e. deleting this symbol in $T$) and the edit distance when deleting the mismatching symbol in $T$. More formally:

\begin{eqnarray*}
  d_{DI}(S[1..n],T[1..m]) 
&=& \left\{
    \begin{array}{ll}
      n \mbox{ if $m==0$;}\\
      m \mbox{ if $n==0$;}\\
      d_{DI}(S[1..n-1],T[1..m-1])  \mbox{ if $S[n]==T[m]$; and } \\
      1 + \min\left\{
      \begin{array}{l}
  d_{DI}(S[1..n-1],T[1..m]), \\
  d_{DI}(S[1..n],T[1..m-1]) 
      \end{array}
\right\} \mbox{ otherwise.}
    \end{array}
\right.
\end{eqnarray*}

This recursive definition directly yields an algorithm to compute the \textsc{Delete Insert Edit Distance} from $S$ to $T$ in time within $O(nm)$ and space within $O(n+m)$. We describe in the next section a technique taking advantage of the discrepancies between the \texttt{Parikh vectors} of $S$ and $T$.

\begin{PEDAGOGICALEXAMPLE}
\subsubsection{A Pedagogical Example:}
\label{sec:InsertDeletePedagogicalExample}

Given two strings $S\in[1..\sigma]^n$ and $T\in[1..\sigma]^m$, for each symbol $\alpha\in[1..\sigma]$, let's note $n_\alpha$ and $m_\alpha$ the number of occurrences of $\alpha$ respectively in $S$ and $T$. Assembled in a vector, those form the \texttt{Parikh vectors} $(n_\alpha)_{\alpha\in[1..\sigma]}$ for $S$ and $(m_\alpha)_{\alpha\in[1..\sigma]}$ for $T$.
\begin{LONG}
Barbay and P{\'{e}}rez{-}Lantero~\cite{2015-SPIRE-AdaptiveComputationOfTheSwapInsertCorrectionDistance-BarbayPerez} described an algorithm to compute the \textsc{Insert Swap Edit Distance} which complexity is expressed in function of how the \texttt{Parikh vectors} of $S$ and $T$ differ.  Likewise, we describe how those affect the difficulty of computing the \textsc{Delete Insert  Edit Distance} from $S$ to $T$.
\end{LONG}

Consider in Figure~\ref{fig:InsertDeleteEditDistanceDynamicProgram} the graphical representation $M_{DI}(S,T)$ of the dynamic program computing the \textsc{Delete Insert  Edit Distance} from $S$ to $T$, following the dynamic program described in the previous section. For general $i\in[1..n]$ and $j\in[1..m]$, the $i$-th value in the $j$-th row, $a=M_{DI}(S,T)[i,j]=d_{DI}(S[1..i],T[1..j])$ is computed by taking the minimum between $b=M_{DI}(S,T)[i-1,j]=d_{DI}(S[1..i-1],T[1..j])$ and $c=M_{DI}(S,T)[i,j-1]=d_{DI}(S[1..i],T[1..j-1])$, the value directly on the left and directly above it: $a=\min\{b,c\}$.

\begin{figure}
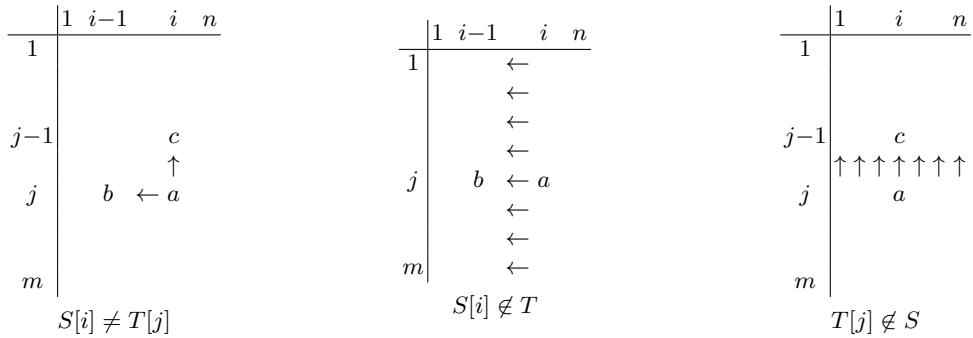

\centering
\begin{minipage}{.3\linewidth}
$$
\begin{array}{c|*{9}{c}}
      & 1        &          & i{-}1    &                          & i        &          &          & n        \\ \hline
  1   &                                                                                                       \\
                                                                                                              \\
                                                                                                              \\
j{-}1 &          &          &          &                          & c        &          &          &          \\
      &          &          &          &                          & \uparrow                                  \\
j     &          &          & b        & \leftarrow               & a        &          &          &          \\
                                                                                                              \\
                                                                                                              \\
 m    &                                                                                                       \\
\end{array}$$
$$S[i]\neq T[j]$$
\end{minipage}
\begin{minipage}{.3\linewidth}
$$\begin{array}{c|*{9}{c}}
      & 1        &          & i{-}1    &                          & i        &          &          & n        \\ \hline
1     &          &          &          & \leftarrow               &          &          &          &          \\
      &          &          &          & \leftarrow               &          &          &          &          \\
      &          &          &          & \leftarrow               &          &          &          &          \\
      &          &          &          & \leftarrow               &          &          &          &          \\
j     &          &          & b        & \leftarrow               & a        &          &          &          \\
      &          &          &          & \leftarrow               &          &          &          &          \\
      &          &          &          & \leftarrow               &          &          &          &          \\
 m    &          &          &          & \leftarrow               &          &          &          &          \\
\end{array}$$
$$S[i]\not\in T$$
\end{minipage}
\begin{minipage}{.3\linewidth}
$$\begin{array}{c|*{9}{c}}
      & 1        &          &   & i        &          &          & n        \\ \hline
  1   &                                                                            \\
                                                                                   \\
                                                                                   \\
j{-}1 &          &          &          & c        &          &          &          \\
      & \uparrow & \uparrow & \uparrow & \uparrow & \uparrow & \uparrow & \uparrow \\
j     &          &          &          & a        &          &          &          \\
                                                                                                              \\
                                                                                                              \\
 m    &                                                                                                       \\
\end{array}$$
$$T[j]\not\in S$$
\end{minipage}
\caption{A graphical representation of the dynamic program computing the  \textsc{Delete Insert  Edit Distance} from $S$ to $T$ in the general case ($S[i]\neq T[j]$) and in the particular case where the symbol at position $i$ in $S$ does not occur in $T$ ($S[i]\not\in T$), and where the symbol at position $j$ in $T$ does not occur in $S$ ($T[j]\not\in S$).}
\label{fig:InsertDeleteEditDistanceDynamicProgram}
\end{figure}

Consider a particular position $i\in[1..n]$ in $S$ such that the symbol $S[i]$ at this position does not occur in $T$ (i.e. $S[i]\not\in T$): this symbol will be deleted in any edition of $S$ into $T$, so that each value in the column $i$ can be obtained by merely duplicating the corresponding one in the column $i{-}1$.
Similarly, consider a particular position $j\in[1..m]$ in $T$ such that the symbol $T[j]$ at this position  does not occur in $S$ (i.e. $T[i]\not\in S$): this symbol will be inserted in any edition of $S$ into $T$, so that each value in the row $j$ can be obtained by merely duplicating the corresponding one in the row $j{-}1$.
The duplication of such columns and row can be simulated in constant time during the execution of the dynamic program, thus reducing the complexity to within $O(n'm'+n+m+\sigma)$ where $n'$ and $m'$ are the lengths of $S$ and $T$ once projected to the intersection of their effective alphabets: $n'=\sum_{\alpha, m_\alpha>0}n_\alpha$ and $m'=\sum_{\alpha, n_\alpha>0}m_\alpha$.
We show in the next section how to further refine this technique, in order to take advantage of rare symbols in each string.

\end{PEDAGOGICALEXAMPLE}
\subsubsection{Refined Analysis:}
\label{sec:InsertDeleteRefinedAnalysis}

We described in the previous section how to take advantage of the fact that some elements appear in one string, but not in the other.  It is natural to wonder if a similar technique can take advantage of cases where a symbol occurs many time in one string, but occurs only once in the other: at some point, the dynamic program will reduce to the case described in the previous section. To be able to notice when this happens, one would need to maintain dynamically the counters of occurrences of each symbol during the execution of the dynamic program, or more simply pre-compute an index on $S$ and $T$ supporting the operators \texttt{rank} and \texttt{select} on it.

Given the support for the \texttt{rank} and \texttt{select} operators on both $S$ and $T$, we can refine the dynamic program to compute the distance $d_{DI}(n,m)$ as follows:
$$
 \left\{
    \begin{array}{ll}
      n \mbox{ if $m==0$;}\\
      m \mbox{ if $n==0$;}\\
      d_{DI}(n-1,m-1)  \mbox{ if $S[n]==T[m]$; } \\
      1+  d_{DI}(n-1,m) \mbox{ if \texttt{rank}$(S,T[m])==0$;} \\
      1+  d_{DI}(n,m-1) \mbox{ if \texttt{rank}$(T,S[n])==0$;} \\
      \min\left\{
      \begin{array}{l}
        1 + d_{DI}(n-1, m-1), \\ 
        n-\mathtt{select}(S,T[m]) \\\phantom{n}+ d_{DI}(\mathtt{select}(S,T[m],\mathtt{rank}(S,T[m])-1)-1 , m-1), \\ 
        m-\mathtt{select}(T,S[n]) \\\phantom{m}+ d_{DI}(n-1 , \mathtt{select}(T,S[n],\mathtt{rank}(T,S[n])-1))-1)  
      \end{array}
\right\} \mbox{ otherwise.}
    \end{array}
\right.
$$

The running time of the algorithm can then be expressed in function of
the number of recursive calls, 
the number of \texttt{rank} and \texttt{select} operations performed on the strings, 
in order to yield various running times depending upon the solution used to support the \texttt{rank} and \texttt{select} operators.

\begin{theorem} \label{th:DIEditDistanceRankSelect}
Given two strings $S\in[1..\sigma]^n$ and $T\in[1..\sigma]^m$ of respective \texttt{Parikh vectors} 
$(n_a)_{a\in[1..\sigma]}$ and $(m_a)_{a\in[1..\sigma]}$, 
the dynamic program above computes the \textsc{Delete Insert Edit Distance} from $S$ to $T$ 
and the \textsc{Longest Common Sub Sequence} between $S$ and $T$
\begin{enumerate}
\item through at most $4\sum_{a\in[1..\sigma]} n_a m_a$ recursive calls;
\item within $O(\sum_{a\in[1..\sigma]} n_a m_a)$ operations \texttt{rank} or \texttt{select};
\item in time within $O(\sum_{a\in[1..\sigma]} n_a m_a \times \lg(\max_a\{n_a,m_a\}) \times \lg(nm))$ in the comparison model; and 
\item in time within  $O(\sum_{a\in[1..\sigma]} n_a m_a \times \lg\lg\sigma \times \lg(nm))$ in the RAM memory model;
  \end{enumerate}  
\end{theorem}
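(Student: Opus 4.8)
The plan is to establish the combinatorial bound of item~(1) first, since items~(2)--(4) follow from it by accounting for the per-state cost of the \texttt{rank}/\texttt{select} queries and of the memoization. First I would make the memoization explicit: each evaluated state $(i,j)\in[0..n]\times[0..m]$ is stored in a dictionary and recomputed at most once, so that the number of recursive calls is at most the maximal out-degree of the recursion tree (which is $3$, attained only in the ``otherwise'' branch) times the number of \emph{distinct} evaluated states, plus the initial call. It therefore suffices to bound the number of distinct interior states by $\sum_{a\in[1..\sigma]} n_a m_a$ up to constant factors and lower-order boundary terms.

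The heart of the argument is a structural lemma: every interior state $(i,j)$ with $i,j\ge 1$ that the recursion evaluates is the diagonal predecessor $(p-1,q-1)$ of some \emph{match cell} $(p,q)$, i.e. a pair with $S[p]=T[q]$. Writing $\mathcal{M}=\{(p,q): S[p]=T[q]\}$, so that $|\mathcal{M}|=\sum_{a\in[1..\sigma]} n_a m_a$, this lemma injects the interior evaluated states into $\mathcal{M}$ and immediately yields the factor $\sum_{a} n_a m_a$; the constant $4$ then absorbs the bounded branching and the boundary states. I would prove the lemma by examining how a state can be \emph{entered}: the match branch recurses diagonally from a cell of $\mathcal{M}$, and each \texttt{select}-driven jump lands, by construction, exactly one cell before the nearest relevant preceding occurrence in the opposite string of the symbol being matched, which is again a cell of $\mathcal{M}$. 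Hence every such entry deposits the recursion just below-left of a match cell.

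The main obstacle is the ``otherwise'' diagonal option $1+d_{DI}(i{-}1,j{-}1)$ together with the \texttt{rank}$=0$ fall-through branches, whose targets are not a priori adjacent to a match cell; left unchecked, a run of such steps could walk down an entire mismatching diagonal, row, or column and break the injection. I expect this to be the delicate part. I would handle it by showing that, in the Delete--Insert setting, the diagonal-with-cost-one option is never strictly cheaper than the corresponding pair of \texttt{select} jumps, so it can be discarded (it is a replacement move, not available for this distance, and is dominated by a deletion followed by an insertion routed through the jumps); and that each maximal run of \texttt{rank}$=0$ steps is chargeable to the string endpoint or match cell at which it terminates, contributing only a lower-order boundary term $O(n+m)$ implicit in the statement. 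Making this charging airtight, with no double counting and multiplicity at most $4$, is the crux, and requires verifying each of the few ways a state can be entered.

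Finally, items~(2)--(4) follow as corollaries. Each state evaluation issues only $O(1)$ \texttt{rank}/\texttt{select} queries, so item~(2) is immediate from item~(1). For item~(3), in the comparison model I would charge each query on a symbol $a$ to a binary (or doubling) search in its posting list, of size at most $\max\{n_a,m_a\}$, costing $O(\lg\max_a\{n_a,m_a\})$, and each memoized lookup to a balanced search tree over the at most $nm$ evaluated states, costing $O(\lg(nm))$; multiplying the $O(\sum_{a} n_a m_a)$ states by the sum of these per-state costs, and bounding that sum by the product $\lg(\max_a\{n_a,m_a\})\cdot\lg(nm)$, gives the claimed running time. Item~(4) is identical, replacing the comparison-based query cost by the $O(\lg\lg\sigma)$ RAM-model bound of Golynski~\etal.
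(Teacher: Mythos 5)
Your reduction of items (2)--(4) to item (1) coincides with the paper's (constant number of \texttt{rank}/\texttt{select} queries per call, then inverted posting lists with doubling search for the comparison model and the index of Golynski~\etal\ for the RAM model; you additionally make explicit the $\lg(nm)$ memo-table cost that the paper leaves unexplained). The problem is item (1), which is the only substantive claim, and there your key structural lemma is false as stated: it is not true that every evaluated interior state is the diagonal predecessor of a match cell. Take $S=\mathtt{xy}$ and $T=\mathtt{yx}$. At $(2,2)$ the ``otherwise'' branch fires (each last symbol occurs in the other prefix) and issues the diagonal call to $(1,1)$, so $(1,1)$ is evaluated; but the only match cells are $(1,2)$ and $(2,1)$, whose diagonal predecessors are $(0,1)$ and $(1,0)$. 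So the injection into $\mathcal{M}$ breaks exactly at the branches you flagged as ``the delicate part'', and neither of your two patches repairs it. Discarding the diagonal option because it is dominated is sound for the \emph{value} of $d_{DI}$, but the theorem counts the recursive calls of ``the dynamic program above'': that call is made whether or not its result is dominated, so pruning it means analyzing a different algorithm. And the $O(n+m)$ boundary term is not ``implicit in the statement'': item (1) asserts the bound $4\sum_a n_a m_a$ with no additive term, and your own rank-$0$ chains show such a term is unavoidable --- for strings over disjoint alphabets $\sum_a n_a m_a=0$, yet the recursion still performs linearly many single-step calls before reaching a base case. So even a completed version of your charging argument proves a (correct) bound of the form $4\sum_a n_a m_a + O(n+m)$, not the literal item (1).

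For the comparison you were asked to make: the paper itself does not prove item (1) either; its proof consists of the sentence ``We prove point (1) by an amortization argument'' with the details omitted, followed by exactly the derivations of (2)--(4) that you give. So your proposal is more explicit than the paper's proof, and the charging/injection scheme you outline is a reasonable candidate for the missing amortization, but as written it contains a false lemma, a patch that changes the algorithm being analyzed, and an additive term that the stated bound does not accommodate. To make it airtight you would need a bounded-multiplicity charge covering the diagonal and rank-$0$ entries as well (for instance, charging an evaluated ``otherwise'' cell $(i,j)$ to the match $(i,\mathtt{select}(T,S[i],\mathtt{rank}(T,S[i],j)))$ in its column, and verifying how many evaluated cells can share one such match), together with an explicit accounting of the $O(n+m)$ boundary states --- which is precisely the step both you and the paper defer.
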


\begin{proof}
We prove point (1) by an amortization argument. 
\begin{TODO}
Analyze the running time of the new dynamic program
(...)

\end{TODO}
Point (2) is a direct consequence of point (1), given that each recursive call performs a finite number of calls to the \texttt{rank} and \texttt{select} operators.  Point (3) is a simple combination of Point (2) with the classical \emph{inverted posting list} implementation~\cite{1999-BOOK-ManagingGigabytes-WittenMoffatBell} of an index supporting the \texttt{select} operator in constant time and the \texttt{rank} operator via \emph{doubling search}~\cite{1976-IPL-AnAlmostOptimalAlgorithmForUnboundedSearching-BentleyYao}; while point (4) is a simple combination of Point (2) with the index described by Golynski~\etal~\cite{2006-SODA-RankSelectOperationsOnLargeAlphabetsAToolForText-GolynskiMunroRao} to support the \texttt{rank} and \texttt{select} operators. 
\end{proof}

Albeit quite simple, this results corresponds to real improvement in practice: see in Figure~\ref{fig:experimentalResultsDI} how the number of recursive calls is reduced by using such indexes.
Moreover, such a refinement of the analysis and optimization of the computation can be applied to more than the \textsc{Delete Insert Edit Distance}: in the next sections, we describe a similar one for computing the \textsc{Levenshtein Distance} (Section~\ref{sec:DIRUpper}) and the \textsc{Delete Replace} and \textsc{Insert Replace Edit Distance} (Section~\ref{sec:DRUpper}).

\subsection{Levenshtein Distance, or DIR-Edit Distance}
\label{sec:DIRUpper}

In information theory, linguistics and computer science, the \textsc{Levenshtein distance} is a string metric for measuring the difference between two sequences~\cite{2000-SPIRE-ASurveyOfLongestCommonSubsequenceAlgorithms-BergrothHakonenRaita}. It generalizes the \textsc{Delete Insert Edit Distance} explored in the previous section by adding the \texttt{Replace} operator to the operators \texttt{Delete} and \texttt{Insert} (so that it can be also called the \textsc{Delete Insert Replace Edit Distance}, or $DIR$ for short).
The recursion traditionally used is a mere extension from the one described in the previous section\begin{LONG}:
\begin{eqnarray*}
  d_{DIR}(n,m) 
&=& \left\{
    \begin{array}{l}
      m \mbox{ if $n==0$;}\\
      +\infty \mbox{ if $n>m$;}\\
      d_{DIR}(n-1,m-1)  \mbox{ if $S[n]==T[m]$; and } \\
      1 + \min\left\{
      \begin{array}{l}
        d_{DIR}(n,m-1), \\
        d_{DIR}(n-1,m-1) 
      \end{array}
\right\}  \mbox{ otherwise.}
    \end{array}
\right.
\end{eqnarray*}
\end{LONG}\begin{SHORT}, and the adaptive version only a technical extension of the one for the \textsc{Delete Insert Edit Distance}: we will compute the distance $d_{DIR}(n,m)$ as\end{SHORT}
\begin{LONG}
The adaptive version is only a technical extension of the one for the \textsc{Delete Insert Edit Distance}:
\end{LONG}
\begin{SHORT}
$$\left\{
    \begin{array}{ll}
      n \mbox{ if $m==0$;}\\
      m \mbox{ if $n==0$;}\\
      d_{DIR}(n-1,m-1)  \mbox{ if $S[n]==T[m]$; } \\
      1+  d_{DIR}(n-1,m-1) \mbox{ 
      \begin{tabular}{l}
if \texttt{rank}$(S,T[m])==0$  \\
and \texttt{rank}$(T,S[n])==0$ ;        
      \end{tabular}
} \\
      1+  d_{DIR}(n-1,m) \mbox{ 
      \begin{tabular}{l}
if \texttt{rank}$(S,T[m])==0$ \\
but \texttt{rank}$(T,S[n])>0$ ;        
      \end{tabular}
} \\
      1+  d_{DIR}(n,m-1) \mbox{ 
      \begin{tabular}{l}
if \texttt{rank}$(T,S[n])==0$ \\
but \texttt{rank}$(S,T[m])>0$ ;
      \end{tabular}} \\
      \min\left\{
      \begin{array}{l}
        n-\mathtt{select}(S,T[m]) \\\phantom{n}+ d_{DIR}(\mathtt{select}(S,T[m],\mathtt{rank}(S,T[m])-1)-1 , m-1), \\ 
        m-\mathtt{select}(T,S[n]) \\\phantom{m}+ d_{DIR}(n-1 , \mathtt{select}(T,S[n],\mathtt{rank}(T,S[n])-1))-1),  \\
        1+ d_{DIR}(n-1,m-1)  
      \end{array}
\right\} \mbox{ otherwise.}
    \end{array}
\right.
$$
\end{SHORT}

\begin{LONG}
$$\left\{
    \begin{array}{ll}
      n \mbox{ if $m==0$;}\\
      m \mbox{ if $n==0$;}\\
      d_{DIR}(n-1,m-1)  \mbox{ if $S[n]==T[m]$; } \\
      1+  d_{DIR}(n-1,m-1) \mbox{ 
      \begin{tabular}{l}
if \texttt{rank}$(S,T[m])==0$  \\
and \texttt{rank}$(T,S[n])==0$ \mbox{ (REPLACE)};        
      \end{tabular}
} \\
      1+  d_{DIR}(n-1,m) \mbox{ 
      \begin{tabular}{l}
if \texttt{rank}$(S,T[m])==0$ \\
but \texttt{rank}$(T,S[n])>0$ \mbox{ (DELETE)};        
      \end{tabular}
} \\
      1+  d_{DIR}(n,m-1) \mbox{ 
      \begin{tabular}{l}
if \texttt{rank}$(T,S[n])==0$ \\
but \texttt{rank}$(S,T[m])>0$ (INSERT);
      \end{tabular}} \\
      \min\left\{
      \begin{array}{l}
        n-\mathtt{select}(S,T[m]) \\+ d_{DIR}(\mathtt{select}(S,T[m],\mathtt{rank}(S,T[m])-1)-1 , m-1) \mbox{ (DELETE) }, \\ 
        m-\mathtt{select}(T,S[n]) \\+ d_{DIR}(n-1 , \mathtt{select}(T,S[n],\mathtt{rank}(T,S[n])-1))-1) \mbox{ (INSERT) },  \\
        1+ d_{DIR}(n-1,m-1)  \mbox{ (REPLACE) }
      \end{array}
\right\} \mbox{ otherwise.}
    \end{array}
\right.
$$
\end{LONG}


The refined analysis yields similar results (we omit the proof for lack of space):

\begin{theorem} \label{th:DIREditDistanceRankSelect}
Given two strings $S\in[1..\sigma]^n$ and $T\in[1..\sigma]^m$ of respective \texttt{Parikh vectors} 
$(n_a)_{a\in[1..\sigma]}$ and $(m_a)_{a\in[1..\sigma]}$, 
the dynamic program above computes the \textsc{Levenshtein Edit Distance} from $S$ to $T$ 
\begin{enumerate}
\item through at most $4\sum_{a\in[1..\sigma]} n_a m_a$ recursive calls;
\item within $O(\sum_{a\in[1..\sigma]} n_a m_a)$ operations \texttt{rank} or \texttt{select};
\item in time within $O(\sum_{a\in[1..\sigma]} n_a m_a \times \lg(\max_a\{n_a,m_a\}) \times \lg(nm))$ in the comparison model; and 
\item in time within  $O(\sum_{a\in[1..\sigma]} n_a m_a \times \lg\lg\sigma \times \lg(nm))$ in the RAM memory model;
  \end{enumerate}  
\end{theorem}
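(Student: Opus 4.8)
The plan is to follow the structure of the proof of Theorem~\ref{th:DIEditDistanceRankSelect}, reducing everything to the bound on the number of recursive calls asserted in point~(1). Points~(2), (3) and~(4) would then be obtained exactly as there: point~(2) follows from point~(1) because each recursive call issues only a constant number of \texttt{rank} and \texttt{select} queries; point~(3) combines point~(2) with the \emph{inverted posting list} implementation, which supports \texttt{select} in constant time and \texttt{rank} by \emph{doubling search} in time within $O(\lg(\max_a\{n_a,m_a\}))$ per query, together with an $O(\lg(nm))$ factor for each access to the memoization table keyed by the pair $(i,j)$; and point~(4) merely replaces the posting lists by the index of Golynski \etal, supporting both operators in time within $O(\lg\lg\sigma)$. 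So the whole argument hinges on point~(1), which I would prove by amortization.

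For point~(1), first I would memoize the recursion, so that each distinct state $(i,j)$ with $i\in[0..n]$ and $j\in[0..m]$ is expanded at most once, and observe that each expansion issues at most three recursive calls (the \texttt{Replace}, \texttt{Delete} and \texttt{Insert} branches of the ``otherwise'' case, every other case issuing at most one). It then suffices to bound the number of distinct expanded states by $\sum_{a\in[1..\sigma]} n_a m_a$, the constant~$4$ absorbing the bookkeeping and the base row and column. The crucial observation is that $\sum_{a\in[1..\sigma]} n_a m_a$ counts exactly the \emph{matching cells}, i.e. the pairs $(i,j)$ with $S[i]=T[j]$. I would then classify the moves: because the \texttt{Delete} and \texttt{Insert} branches use \texttt{select} to jump directly to the \emph{last} occurrence of the sought symbol, every \texttt{Delete}-jump, \texttt{Insert}-jump, and diagonal \emph{match} move landing on a state $(i,j)$ forces the cell $(i{+}1,j{+}1)$ to be a matching cell; since $(i,j)\mapsto(i{+}1,j{+}1)$ is injective, the number of states reached by such moves is at most the number of matching cells.

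The hard part will be the states reached \emph{only} by \texttt{Replace} moves, which form maximal diagonal chains of mismatching cells and are not charged to a matching cell by the injection above. My plan is to charge each such chain to the cell at which it terminates: following a chain downward along its diagonal, it ends either at a matching cell (which triggers a \emph{match} move), at a cell of the pure-\texttt{Replace} special case (where both symbols are absent from the opposite string), or at the boundary; I would argue that the pure-\texttt{Replace} cells contribute only a term linear in the number of positions carrying a symbol absent from the opposite string, hence dominated by the $O(n+m)$ boundary contribution, while the chains ending at matching cells are charged injectively to those cells. Controlling the total length of these \texttt{Replace} chains without double counting is the delicate point, and it is precisely the step that the amortization argument sketched for Theorem~\ref{th:DIEditDistanceRankSelect} has to carry out; once it is established that their aggregate length is within a constant factor of the number of matching cells plus $O(n+m)$, combining this with the bound by the number of matching cells on the remaining states and the bound of three calls per expanded state yields the claimed $4\sum_{a\in[1..\sigma]} n_a m_a$ bound on the number of recursive calls.
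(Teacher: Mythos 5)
Your reduction of points (2)--(4) to point (1) matches the paper exactly: the paper proves Theorem~\ref{th:DIEditDistanceRankSelect} precisely that way and then states Theorem~\ref{th:DIREditDistanceRankSelect} with the proof omitted ``for lack of space''. So everything rests on point (1), and there your proposal has a genuine gap that you yourself flag: you never bound the aggregate length of the pure-\texttt{Replace} diagonal chains, deferring it to ``the amortization argument sketched for Theorem~\ref{th:DIEditDistanceRankSelect}''. But the paper never gives that argument --- its entire proof of point (1) there is the sentence ``We prove point (1) by an amortization argument'' --- so you are deferring to something that does not exist; the chain-length bound is not a detail to be quoted, it is the whole content of point (1). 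Your injection of jump-landing cells into matching cells via $(i,j)\mapsto(i{+}1,j{+}1)$ is correct, but it is the easy half.

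Worse, the step you leave open cannot be carried out, because the bound you would need --- aggregate \texttt{Replace}-chain length within $O(\sum_a n_a m_a + n + m)$ --- is false, and with it point (1) itself. First, if $S$ and $T$ have disjoint alphabets then $\sum_a n_a m_a=0$, yet the recursion performs $\Theta(\min\{n,m\})$ calls along one pure-\texttt{Replace} chain, so the stated bound already fails additively. More seriously, take $\sigma$ distinct symbols, $S=z_1z_2\cdots z_\sigma$ and $T=z_\sigma\cdots z_2z_1$, so $\sum_a n_a m_a=\sigma$. At a cell $(i,j)$ one checks that both ranks are positive exactly when $i+j\ge \sigma+1$ (with a match exactly when $i+j=\sigma+1$) and both are zero when $i+j\le\sigma$. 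Hence every main-diagonal cell $(i,i)$ with $i\ge \sigma/2+1$ falls in the ``otherwise'' case; its delete-jump lands at $(\sigma-i,\,i-1)$ and its insert-jump at $(i-1,\,\sigma-i)$, and each landing cell lies in the region $i+j\le\sigma$, so it starts a pure-\texttt{Replace} chain running down its own diagonal to the boundary. These $\Theta(\sigma)$ chains lie on pairwise distinct diagonals, so memoization never merges them, and their lengths are $1,2,\ldots,\Theta(\sigma/2)$, giving $\Theta(\sigma^2)$ expanded cells and hence $\Theta(\sigma^2)$ recursive calls against the claimed bound of $4\sigma$. So the delicate step is not delicate but impossible: the \texttt{Replace} chains are exactly where the theorem as stated breaks, and any proof along your lines (or any lines) must first weaken the bound --- e.g.\ by an additive term that dominates these chains --- or modify the recurrence so that pure-\texttt{Replace} runs are collapsed in constant time the way the jump branches collapse runs of deletions and insertions.
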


It is important to note that for two strings $S$ and $T$, the computation of the \textsc{Levenshtein Edit Distance} from $S$ to $T$ actually generates more recursive calls than the computation of the \textsc{Delete Insert Edit Distance} from $S$ to $T$, but that the analysis above does not capture this difference.
In the following section, we project this result to two equivalent edit distances, the \textsc{Delete Replace} and \textsc{Insert Replace Edit Distances}, for which the dynamic program explores only half of the position in the dynamic program matrix compared to the \textsc{Levenshtein Edit Distance} or \textsc{Delete Insert Edit Distance}.

\subsection{DR-Edit Distance and IR-Edit Distance}
\label{sec:DRUpper}

Given a source string $S\in[1..\sigma]^n$ and a target string $T\in[1..\sigma]^m$, the \textsc{Delete Replace Edit Distance} from $S$ to $T$ and the \textsc{Insert-Replace Edit Distance} from $T$ to $S$ are the same, as the sequence of \texttt{Insert} or \texttt{Replace} operations transforming $S$ into $T$ is the symmetric to the sequence of \texttt{Delete} or \texttt{Replace} operations transforming $T$ back into $S$. 
%


As before, if the last symbols of $S$ and $T$ match, the edit distance is the same as the edit distance between the prefixes of respective lengths $n-1$ and $m-1$ of $S$ and $T$. Otherwise, the edit distance is the minimum between the edit distance when inserting a copy of the last symbol of $T$ in $S$ (i.e. deleting this symbol in $T$) and the edit distance when replacing the mismatching symbol in $S$ by the corresponding one in $T$. 
\begin{LONG}
More formally:
\begin{eqnarray*}
  d_{DR}(S[1..n],T[1..m]) 
&=& \left\{
    \begin{array}{l}
      m \mbox{ if $n==0$;}\\
      +\infty \mbox{ if $n>m$;}\\
      d_{DR}(S[1..n-1],T[1..m-1])  \mbox{ if $S[n]==T[m]$; and } \\
      1 + \min\left\{
      \begin{array}{l}
        d_{DR}(S[1..n],T[1..m-1]), \\
        d_{DR}(S[1..n-1],T[1..m-1]) 
      \end{array}
      \right\}  \mbox{ otherwise.}
    \end{array}
  \right.
\end{eqnarray*}

\end{LONG}

\begin{LONG}
Using a few more optimizations than in Section~\ref{sec:DeleteInsertClassical}, this recursive definition yields an algorithm to compute the \textsc{Insert Replace Edit Distance} from $S$ to $T$ in time within $O(m^2)$ and space within $O(m)$.
One can observe a few optimizations, such as that the edit distance can be computed in time linear in $m$ as soon as $n$ is equal to $m$, as no further \texttt{Insert} operation can be performed.
\end{LONG}
%


As in the two previous sections, given the support for the \texttt{rank} and \texttt{select} operators on both $S$ and $T$, we can refine the dynamic program to compute the \textsc{Delete Replace Edit Distance} $d_{DR}(n,m)$ as follows:
\begin{LONG}
$$
\left\{
    \begin{array}{ll}
      n \mbox{ if $m==0$;}\\
      \infty \mbox{ if $n<m$;}\\
      d_{DR}(n-1,m-1)  \mbox{ if $S[n]==T[m]$; } \\
      1+  d_{DR}(n-1,m) \mbox{ if \texttt{rank}$(T,S[n])==0$ (DELETE);} \\ 
      1+  d_{DR}(n-1,m-1) \mbox{ if \texttt{rank}$(S,T[n])==0$ (REPLACE);} \\ 
      \min\left\{
      \begin{array}{l}
        n-\mathtt{select}(S,T[m]) \\+ d_{DR}(\mathtt{select}(S,T[m],\mathtt{rank}(S,T[m])-1)-1 , m-1) \mbox{ (DELETE), } \\ 
        1 + d_{DR}(n-1,m-1)  \mbox{ (REPLACE) } 
      \end{array}
\right\} \mbox{ otherwise.}
    \end{array}
\right.
$$
\end{LONG}
\begin{SHORT}
$$
\left\{
    \begin{array}{ll}
      n \mbox{ if $m==0$;}\\
      \infty \mbox{ if $n<m$;}\\
      d_{DR}(n-1,m-1)  \mbox{ if $S[n]==T[m]$; } \\
      1+  d_{DR}(n-1,m) \mbox{ if \texttt{rank}$(T,S[n])==0$ ;} \\ 
      1+  d_{DR}(n-1,m-1) \mbox{ if \texttt{rank}$(S,T[n])==0$ ;} \\ 
      \min\left\{
      \begin{array}{l}
        n-\mathtt{select}(S,T[m]) \\ \phantom{n}+ d_{DR}(\mathtt{select}(S,T[m],\mathtt{rank}(S,T[m])-1)-1 , m-1), \\ 
        1 + d_{DR}(n-1,m-1)   
      \end{array}
\right\} \mbox{ otherwise.}
    \end{array}
\right.
$$
\end{SHORT}
 
The analysis from the two previous sections projects to a similar result.

\begin{theorem} \label{th:DREditDistanceRankSelect}
Given two strings $S\in[1..\sigma]^n$ and $T\in[1..\sigma]^m$ of respective \texttt{Parikh vectors} 
$(n_a)_{a\in[1..\sigma]}$ and $(m_a)_{a\in[1..\sigma]}$, 
the dynamic program above computes the \textsc{Delete Replace Edit Distance} from $S$ to $T$ 
\begin{enumerate}
\item through at most $4\sum_{a\in[1..\sigma]} n_a m_a$ recursive calls;
\item within $O(\sum_{a\in[1..\sigma]} n_a m_a)$ operations \texttt{rank} or \texttt{select};
\item in time within $O(\sum_{a\in[1..\sigma]} n_a m_a \times \lg(\max_a\{n_a,m_a\}) \times \lg(nm))$ in the comparison model; and 
\item in time within  $O(\sum_{a\in[1..\sigma]} n_a m_a \times \lg\lg\sigma \times \lg(nm))$ in the RAM memory model;
  \end{enumerate}  
\end{theorem}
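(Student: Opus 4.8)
The plan is to follow exactly the three-step template used for Theorem~\ref{th:DIEditDistanceRankSelect}, adapting it to the asymmetry of the \textsc{Delete Replace} recurrence (only \texttt{Delete} and \texttt{Replace}, hence a single family of \texttt{select}-jumps on $S$, and the pruning boundary $d_{DR}(n,m)=\infty$ whenever $n<m$). First I would establish point~(1), the bound on the number of recursive calls, by an amortization argument; points~(2)--(4) then follow mechanically: point~(2) because every recursive call issues only a constant number of \texttt{rank}/\texttt{select} queries, and points~(3)--(4) by substituting, respectively, the \emph{inverted posting list} implementation~\cite{1999-BOOK-ManagingGigabytes-WittenMoffatBell} with \emph{doubling search}~\cite{1976-IPL-AnAlmostOptimalAlgorithmForUnboundedSearching-BentleyYao} (a \texttt{rank} query costing $O(\lg(\max_a\{n_a,m_a\}))$ since each posting list has length at most $\max_a\{n_a,m_a\}$) and the RAM index of Golynski~\etal~\cite{2006-SODA-RankSelectOperationsOnLargeAlphabetsAToolForText-GolynskiMunroRao} (each query costing $O(\lg\lg\sigma)$), together in both cases with the $O(\lg(nm))$ cost of the dictionary memoizing the already-computed cells $(i,j)$.

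For point~(1), the key invariant to prove is that every cell $(i,j)$ computed by the recursion can be charged to a \emph{matching pair}, i.e.\ a position pair with $S[i']=T[j]$, in such a way that each matching pair receives $O(1)$ charges; since there are exactly $\sum_{a\in[1..\sigma]} n_a m_a$ matching pairs, this yields the claimed bound, with the constant $4$ absorbing the at-most-two recursive branches of the final (\emph{otherwise}) case and the two boundary transitions. The mechanism is that the \texttt{select}-jump $n-\mathtt{select}(S,T[m])+d_{DR}(\mathtt{select}(S,T[m],\mathtt{rank}(S,T[m])-1)-1,m-1)$ moves the computation directly to the cell immediately below-left of the matching pair $(p,m)$, where $p$ is the last occurrence of $T[m]$ in $S[1..n]$; consequently every run of mismatching cells is crossed in $O(1)$ steps rather than being materialised, and the surviving cells concentrate around the matching pairs. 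I would formalise this by reading the recursion as a collection of diagonal \texttt{Replace} segments punctuated by \texttt{Delete}-jumps, charging the endpoint of each \texttt{select}-jump to its target matching pair and each diagonal step to the matching pair terminating its segment.

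The step I expect to be the main obstacle is precisely this charging, and specifically the treatment of symbols of $S$ that do not occur in $T$ (the branch $1+d_{DR}(n-1,m)$ taken when $\mathtt{rank}(T,S[n])=0$): such a symbol $a$ contributes $n_a m_a=0$ to the sum $\sum_a n_a m_a$, yet the step-by-step \texttt{Delete} it triggers can form long runs with no matching pair to absorb them, exactly the situation that the collapsing of ``absent'' columns handled in Section~\ref{sec:InsertDeletePedagogicalExample}. Making the argument rigorous therefore requires either jumping over these runs with a single query on the ``membership in $T$'' structure (so that each maximal absent run costs $O(1)$ and the number of runs is bounded by the matching pairs plus the alphabet), or absorbing them into an additive $O(n+m+\sigma)$ term, as already done for the effective-alphabet projection; in the regimes of interest, where $\sum_a n_a m_a$ dominates $n+m+\sigma$, this term is negligible and the stated bound holds. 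The same additive term, together with the $\infty$-pruning of the region $n<m$, is what distinguishes the \textsc{Delete Replace} analysis from the symmetric two-jump analysis of Theorem~\ref{th:DIEditDistanceRankSelect}, but it does not affect points~(2)--(4), whose asymptotics are governed by the number of surviving cells.
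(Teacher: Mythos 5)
Your overall skeleton is the same as the paper's: for this theorem the paper offers no proof beyond the remark that ``the analysis from the two previous sections projects to a similar result'', and that analysis (the proof of Theorem~\ref{th:DIEditDistanceRankSelect}) is exactly your template --- point (1) asserted ``by an amortization argument'' with no details given, and points (2)--(4) obtained mechanically from the posting-list/doubling-search index and from the index of Golynski \etal{}. So on points (2)--(4) you and the paper agree essentially verbatim. The valuable part of your proposal is the obstacle you flag for point (1), and you should trust your diagnosis more than you do: it is not a weakness of one particular charging scheme but a counterexample to the statement itself. Take $S=a\,b^{n-1}$ and $T=a$: then $\sum_{a'\in[1..\sigma]} n_{a'} m_{a'}=1$, so point (1) allows at most $4$ recursive calls, yet the branch $1+d_{DR}(n-1,m)$ fires at every position from $n$ down to $2$, producing $n-1$ calls. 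Hence the statement, not merely the proof, needs repair. Moreover, of your two candidate repairs, the additive-$O(n+m+\sigma)$ term is weaker than you suggest: with memoization the cells of an absent run are indexed by the row as well (the cells $(i,j)$ and $(i,j')$ are distinct), so the same run of symbols absent from $T$ can be re-traversed once in every row the recursion enters, contributing on the order of the run length times the number of such rows, which $O(n+m+\sigma)$ does not cover. Only the run-compression repair can restore a bound close to the claimed form.

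There is also a DR-specific flaw that your proposal inherits by taking the displayed recurrence at face value: the branch you single out, $1+d_{DR}(n-1,m)$ when $\mathtt{rank}(T,S[n])=0$, is incorrect for the distance, not just slow. Without \texttt{Insert}, a symbol of $S$ absent from $T$ need not be deleted --- it may have to be \emph{replaced}. On $S=ac$ and $T=aa$, only the forced-delete condition matches, and the recurrence returns $1+d_{DR}(1,2)=\infty$, whereas $d_{DR}(ac,aa)=1$; a correct version must use $1+\min\{d_{DR}(n-1,m),\,d_{DR}(n-1,m-1)\}$ in that case (the authors' own Python implementation does retain the replace branch there, unlike the displayed recurrence). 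Since the theorem asserts that the dynamic program \emph{computes} the \textsc{Delete Replace Edit Distance}, any complete proof must first make this correction, and the correction interacts with your run-compression: a run of $k$ symbols absent from $T$ costs exactly $k$, but the number of them that are replaced rather than deleted determines how many positions of $T$ are consumed, so the compressed jump must minimize over a set of landing cells $(i-k,\,j-r)$ with $0\le r\le\min\{k,j\}$, rather than land on a single cell as in the \textsc{Delete Insert} case. Carrying out point (1) for this corrected, compressed recurrence is where the real work lies; once a correct count of surviving cells is in place, your (and the paper's) derivation of points (2)--(4) goes through unchanged.
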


Parameterizing the analysis of the computation of the \textsc{Longest Common Sub Sequence}, of the \textsc{Levenshtein Edit Distance} and of the \textsc{Delete Replace} or \textsc{Insert Replace Edit Distance} would be only of moderate theoretical interest, if it did not correspond to some correspondingly ``easy'' instances in practice. In the next section we describe some preliminary experimental results which seem to indicate the existence of such ``easy'' instances in information retrieval.

\section{Experiments}
\label{sec:experiments}

In order to test the practicality of the parameterization and algorithms described in the previous section, we performed some preliminary experiments on some public data sets from the \textsc{Gutemberg} project~\cite{gutembergProject}. 
\begin{LONG}
We describe the data set and experimental setup in Section~\ref{sec:datasets}, and the preliminary results and their interpretation in Section~\ref{sec:experimentalResults}.
\end{LONG}
\begin{LONG}
\subsection{Data Sets}
\label{sec:datasets}
\end{LONG}
\begin{LONG}
Started by Michael Hart in 1971~\cite{gutembergProjectWikipedia}, the \textsc{Gutemberg} project gathers electronic copies of public domain books, and as such is a publically available data set for testing algorithms on real text.
\end{LONG}
We considered each text as a sequence of words (hence considering as equivalent all the word separations, from blank spaces to punctuations and line jumps), which results in large alphabets.
Due to some problems with the implementation, we could not run the algorithms for texts larger than 32kB (a memory issue with a library in \texttt{Python}), so we extracted the first 32kB of the texts
 ``Romeo \& Juliet'' (English),
 ``Romeo \& Julia'' (German),
 ``Hamlet'' (German), and 
 ``Punch or the London chivalry vol 99'' (English);
 the last text being a randomly picked non Shakespeare text.

 \begin{LONG}
 \subsection{Experimental Results}
 \label{sec:experimentalResults}
 \end{LONG}

Figures~\ref{fig:experimentalResultsDI}, \ref{fig:experimentalResultsDIR} and~\ref{fig:experimentalResultsDR} show the number of recursive calls from the main recursive function for four pairs of texts for each algorithm described in Section~\ref{sec:algorithms}: 
 ``Romeo \& Juliet'' (English) vs  ``Punch or the London chivalry vol 99'' (English),
 ``Romeo \& Juliet'' (English) vs  ``Romeo \& Julia'' (German),  
``Romeo \& Juliet'' (English) vs ``Hamlet'' (English), and 
``Romeo \& Julia'' (German) vs ``Hamlet'' (German).

\begin{figure}
\begin{minipage}[t]{.9\linewidth}
\includegraphics[width=.9\linewidth]{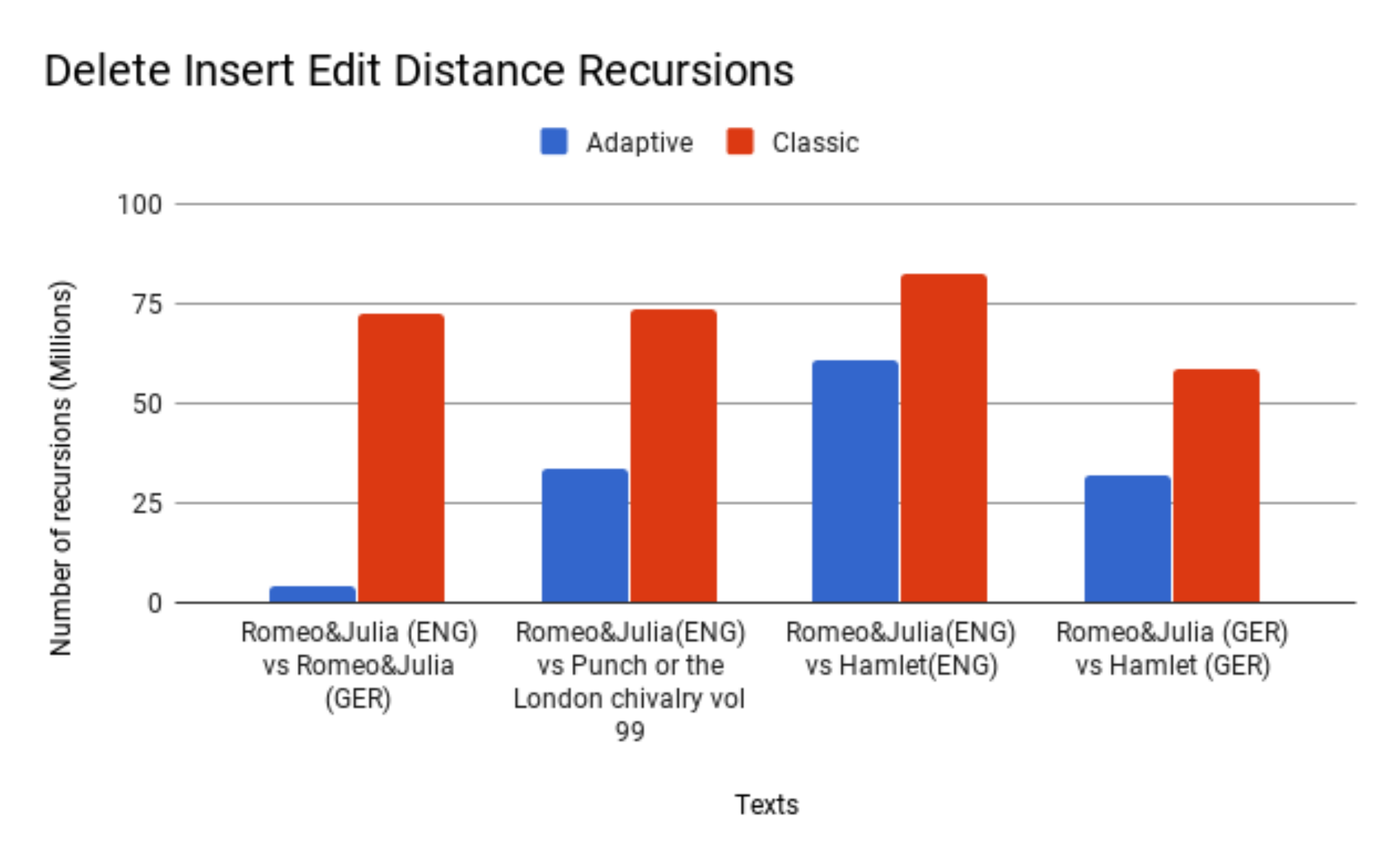}
\begin{LONG}
\caption{Experimental results for the computation of the \textsc{Delete Insert Edit Distance} (and, by extention, of the computation of the \textsc{Longest Common Sub Sequence}) by the adaptive algorithm described in Section \ref{sec:DIUpper}.\label{fig:experimentalResultsDI}}
\end{LONG}
\begin{SHORT}
\caption{Experimental results for the \textsc{Delete Insert Edit Distance}.\label{fig:experimentalResultsDI}}
\end{SHORT}
\end{minipage}
\hfill
\begin{minipage}[t]{.9\linewidth}
\includegraphics[width=.9\linewidth]{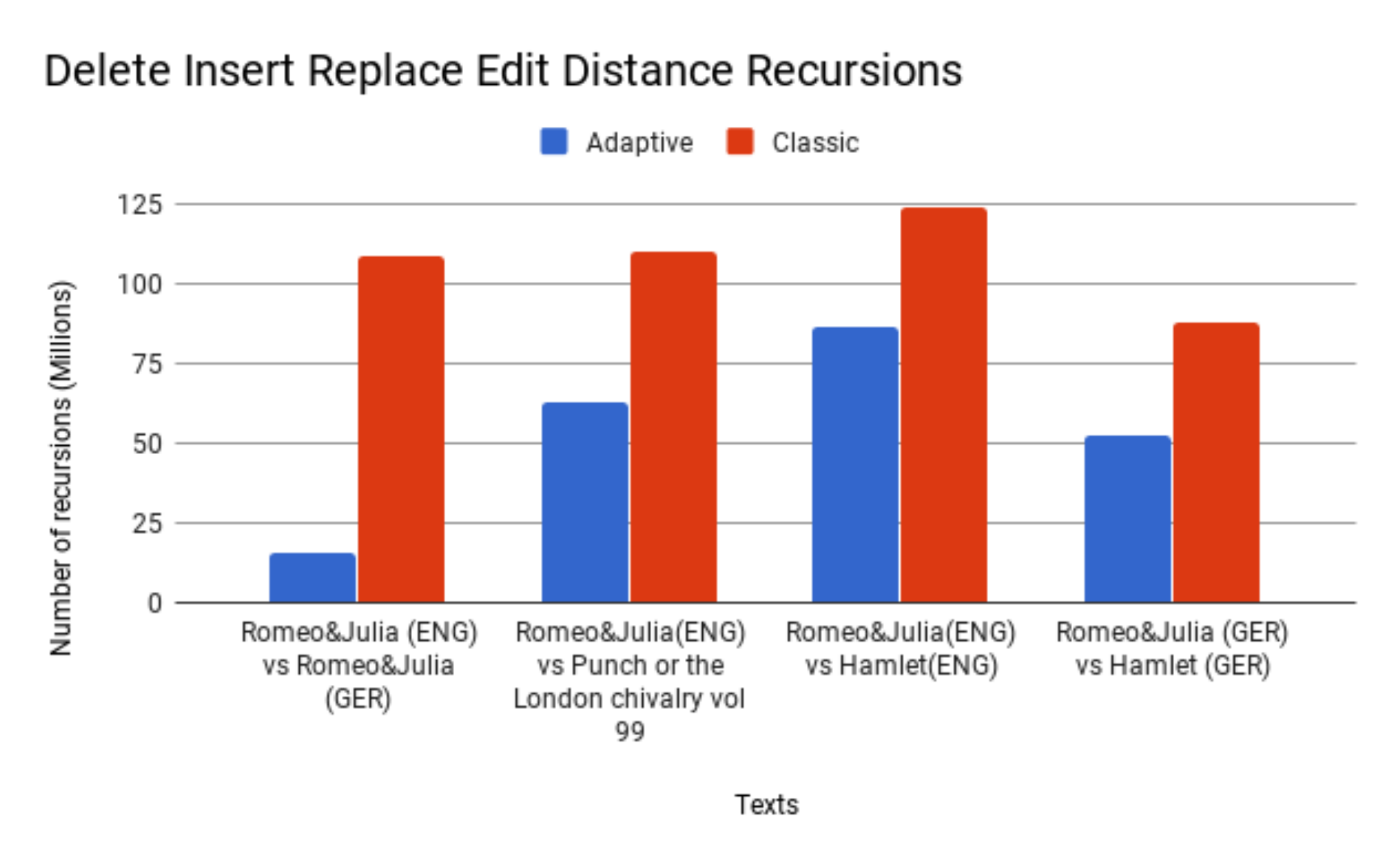}
\begin{LONG}
\caption{Experimental results for the computation of the \textsc{Levenshtein Edit Distance} by the adaptive algorithm described in Section~\ref{sec:DIRUpper}.\label{fig:experimentalResultsDIR}}
\end{LONG}
\begin{SHORT}
\caption{Experimental results for the \textsc{Levenshtein Edit Distance}.\label{fig:experimentalResultsDIR}}
\end{SHORT}
\end{minipage}
\hfill
\begin{minipage}[t]{.9\linewidth}
\includegraphics[width=.9\linewidth]{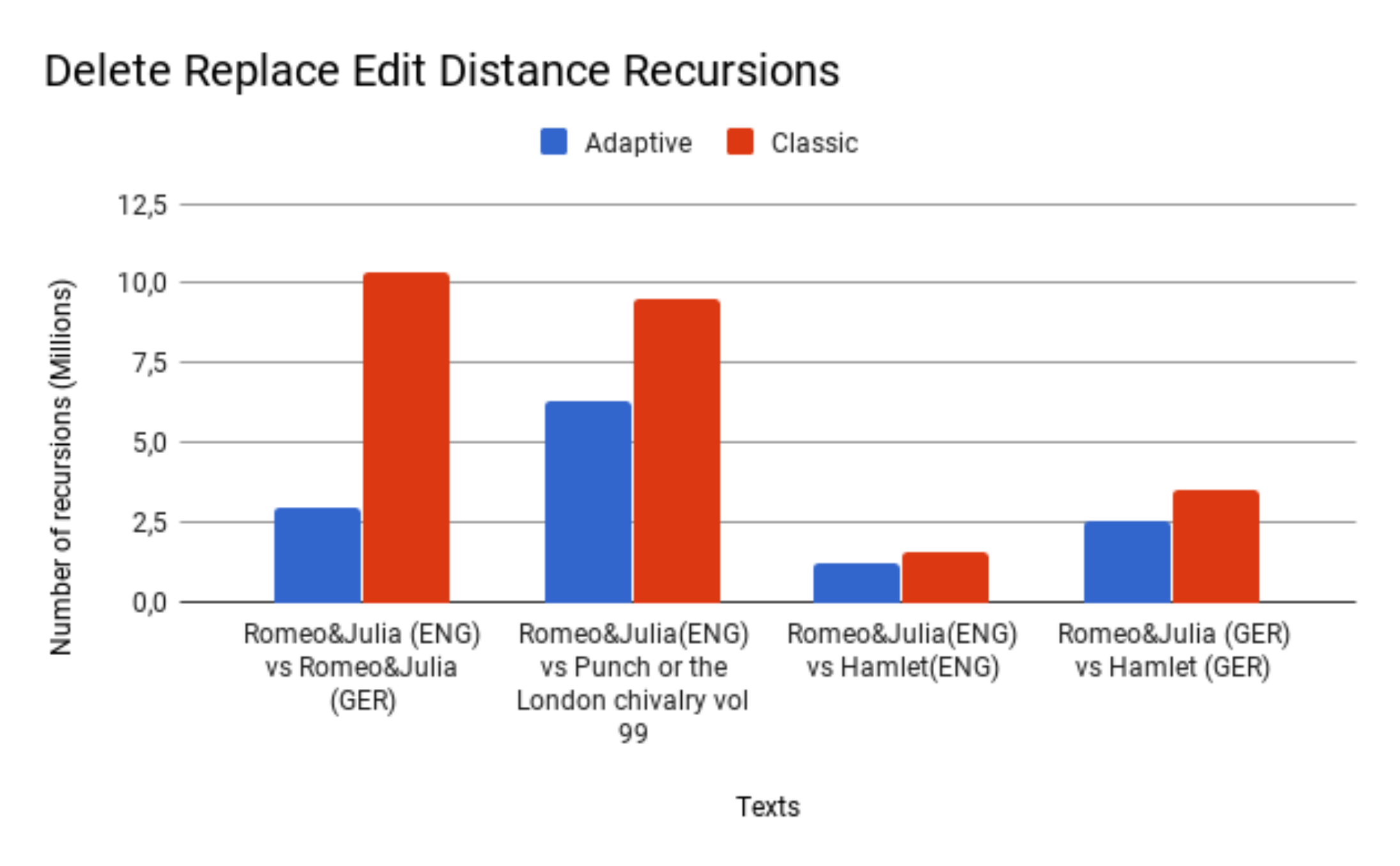}
\begin{LONG}
\caption{Experimental results for the computation of the \textsc{Delete Replace Edit Distance} (and, by extension, of the computation of the \textsc{Insert Replace Edit Distance}) by the adaptive algorithm described in Section~\ref{sec:DRUpper}.\label{fig:experimentalResultsDR}}
\end{LONG}
\begin{SHORT}
\caption{Experimental results for the \textsc{Delete Replace Edit Distance}.\label{fig:experimentalResultsDR}}
\end{SHORT}
\end{minipage}
\end{figure}

For the three types of \textsc{Edit Distances} and the four pairs of texts, the adaptive variants perform less recursive calls.  
For the three types of \textsc{Edit Distances}, the difference in the number of recursive calls is less between the two texts  from the same author (i.e. ``Romeo \& Juliet'' (English) vs ``Hamlet'' (English) and ``Romeo \& Julia'' (German) vs ``Hamlet'' (German)), because the vocabulary is the same, and is the most between texts of distinct languages (i.e. ``Romeo \& Juliet'' (English) vs ``Romeo \& Julia'' (German)), because the vocabulary (i.e. the alphabet) is mostly distinct. Still, for two texts in the same language, but from distinct authors (i.e.  ``Romeo \& Juliet'' (English) vs ``Punch or the London chivalry vol 99'' (English)), the difference is quite sensible.

Obviously, those experimental results are only preliminary, and a more thorough study is needed (and underway), both with a larger data set and with a larger range of measures, from the \emph{running time} with various indexing data structures supporting the operators \texttt{rank} and \texttt{select}, to the number of entries of the dynamic program matrix being effectively computed.
We discuss additional perspectives for future work in the next section.

\begin{CONDITIONAL}
\section{Conditional Lower Bounds}
\label{sec:lowerBounds}

\subsection{LCSS or DI-Edit Distance}
\label{sec:DILower}

\subsection{DIR-Edit Distance}
\label{sec:DIRLower}

\subsection{DR-Edit Distance}
\label{sec:DRLower}

\end{CONDITIONAL}

\section{Discussion}
\label{sec:discussion}

We have shown how the computation of other \textsc{Edit Distances} than the \textsc{Insert Swap} and \textsc{Delete Swap Edit Distance} is also sensitive to the \texttt{Parikh vectors} of the input. We discuss here various directions in which these results can be extended, from the possibility of proving conditional lower bounds in the refined analysis model, to further refinements of the analysis for these same \textsc{Edit Distances}, and to the analysis of other dynamic programs.

\subsubsection{Adaptive Conditional Lower Bounds:}
\label{sec:ConditionalLowerBounds}

Backurs and Indyk~\cite{2015-STOC-EditDistanceCannotBeComputedInStronglySubquadraticTimeUnlessSETHIsFalse-BackursIndyk} showed that the $O(n^2)$ upper bound for the computation of the \textsc{Delete Insert Replace Edit Distance} is optimal unless the \emph{Strong Exponential Time Hypothesis} (SETH) is false, and since then the technique has been applied to various other related problems. 
Should the reduction from the SETH to the \textsc{Edit Distance} computation be refined as shown here for the upper bound, it would speak in favor of the optimality of the analysis.

\subsubsection{Other measures of difficulty:}
\label{sec:other-meas-diff}

Abu-Khzam et al.~\cite{2011-DO-ChargeAndReduceAFixedParameterAlgorithmForStringToStringCorrection-AbukhzamFernauLangstonLeeculturaStege} described an algorithm computing the \textsc{Insert Swap Edit Distance}  $d$ from $S$ to $T$ in time within $O({1.6181}^d m)$, which is polynomial in the size of the input if exponential in the output size $d$.  The output distance $d$ itself can be as large as $n$, but such instances are not necessarily difficult: Barbay and P\'erez-Lantero~\cite{2015-SPIRE-AdaptiveComputationOfTheSwapInsertCorrectionDistance-BarbayPerez} showed that the gap vector between the \texttt{Parikh vectors} separate the hard instances from the easy one, and we showed that the same applied to other \textsc{Edit Distances}.

But still, among instances of fixed input size, output distance, and imbalance between the \texttt{Parikh vectors}, there are instances easier than others (e.g. the computation of the \textsc{Insert Swap Edit Distance} on an instance where all the \texttt{insertions} are in the left part of $S$ while all the \texttt{swaps} are in the right part of $S$). A measure which to refine the analysis would be the cost of encoding a \emph{certificate} of the \textsc{Edit Distance}, one which is easier to check than recomputing the distance itself.

\subsubsection{Indexed Dynamic Programming:}
\label{sec:similarProblems}

Our results are close in spirit to those in fixed-parameter complexity, but with an important difference, namely, trying to spot one or more parameters that explain what makes an instance hard or easy.
For the computation of the \textsc{Insert Swap} and \textsc{Delete Swap Edit Distances}, the size of the alphabet $d$ makes the difference between polynomial time and NP-hardness. However, Barbay and P\'erez-Lantero~\cite{2015-SPIRE-AdaptiveComputationOfTheSwapInsertCorrectionDistance-BarbayPerez} showed that different instances of the same size can exhibit radically different costs-for a given fixed algorithm. The parameterized analysis captures in parameters the cause for such cost differences.  We described how the same logic applies to other types of \textsc{Edit Distances}, and it is likely that similar situations happen with many other algorithms based on dynamic programming, such as the computation of the \textsc{Fr\'echet Distance}~\cite{1995-IJCGA-ComputingTheFrechetDistanceBetweenTwoPolygonalCurves-AltGodau}, the \textsc{Discrete Fr\'echet Distance}~\cite{1994-TR-ComputingDiscreteFrechetDistance-EiterMannila} and the decision problem \textsc{Orthogonal Vector}~\cite{2014-FOCS-WhyWalkingTheDogTakesTimeFrechetDistanceHasNoStronglySubquadraticAlgorithmsUnlessSETHFails-Bringmann}.

\medskip \textbf{Acknowledgments:} The author would like to thank Pablo P\'erez-Lantero for introducing the problem of computing the \textsc{Edit Distance} between strings; Felipe Lizama for a semester of very interesting discussions about this approach; and an anonymous referee from the journal Transaction on Algorithms for his positive feedback and encouragement.
\textbf{Funding:} J\'er\'emy Barbay is partially funded by the project Fondecyt Regular no. 1170366 from Conicyt.
%
%
\textbf{Data and Material Availability:} The source of this article, along with the code and data used for the experiments described within, will be made publicly available upon publication at the url \url{https://github.com/FineGrainedAnalysis/EditDistances}.
%

\bibliographystyle{splncs04} 
\bibliography{/home/jbarbay/EverGoing/Bibliography/bibliographyDatabaseJeremyBarbay,/home/jbarbay/EverGoing/Publications/publications-ExportedFromOrgmode-Barbay}

\begin{thebibliography}{10}
\providecommand{\url}[1]{\texttt{#1}}
\providecommand{\urlprefix}{URL }
\providecommand{\doi}[1]{https://doi.org/#1}

\bibitem{2011-DO-ChargeAndReduceAFixedParameterAlgorithmForStringToStringCorrection-AbukhzamFernauLangstonLeeculturaStege}
Abu-Khzam, F.N., Fernau, H., Langston, M.A., Lee-Cultura, S., Stege, U.: Charge
  and reduce: A fixed-parameter algorithm for string-to-string correction.
  Discrete Optimization (DO)  \textbf{8}(1),  41 -- 49 (2011)

\bibitem{2015-STOC-EditDistanceCannotBeComputedInStronglySubquadraticTimeUnlessSETHIsFalse-BackursIndyk}
Backurs, A., Indyk, P.: Edit distance cannot be computed in strongly
  subquadratic time (unless {SETH} is false). In: Proceedings of the annual ACM
  Symposium on Theory Of Computing (STOC) (2015)

\bibitem{2014-Algorithmica-EfficientFullyCompressedSequenceRepresentations-BarbayClaudeGagieNavarroNekrich}
Barbay, J., Claude, F., Gagie, T., Navarro, G., Nekrich, Y.: Efficient
  fully-compressed sequence representations. Algorithmica (ALGO)
  \textbf{69}(1),  232--268 (2014)

\bibitem{2015-SPIRE-AdaptiveComputationOfTheSwapInsertCorrectionDistance-BarbayPerez}
Barbay, J., P{\'{e}}rez{-}Lantero, P.: Adaptive computation of the swap-insert
  correction distance. In: Proceedings of the Annual Symposium on String
  Processing and Information Retrieval (SPIRE). pp. 21--32 (2015)

\bibitem{2018-TALG-AdaptiveComputationOfTheSwapInsertCorrectionDistance-BarbayPerez}
Barbay, J., P{\'{e}}rez{-}Lantero, P.: Adaptive computation of the swap-insert
  correction distance. In: ACM Transactions on Algorithms (TALG) (2018),
  accepted on [2018-05-25 Fri], to appear.

\bibitem{1976-IPL-AnAlmostOptimalAlgorithmForUnboundedSearching-BentleyYao}
Bentley, J.L., Yao, A.C.C.: An almost optimal algorithm for unbounded
  searching. Information Processing Letters (IPL)  \textbf{5}(3),  82--87
  (1976)

\bibitem{2000-SPIRE-ASurveyOfLongestCommonSubsequenceAlgorithms-BergrothHakonenRaita}
Bergroth, L., Hakonen, H., Raita, T.: A survey of longest common subsequence
  algorithms. In: Proceedings of the 11th Symposium on String Processing and
  Information Retrieval (SPIRE). pp. 39--48 (2000)

\bibitem{2014-FOCS-WhyWalkingTheDogTakesTimeFrechetDistanceHasNoStronglySubquadraticAlgorithmsUnlessSETHFails-Bringmann}
Bringmann, K.: Why walking the dog takes time: {Fr\'echet} distance has no
  strongly subquadratic algorithms unless {SETH} fails. In: Proceedings of the
  2014 IEEE 55th Annual Symposium on Foundations of Computer Science. pp.
  661--670. FOCS '14, IEEE Computer Society, Washington, DC, USA (2014)

\bibitem{1994-TR-ComputingDiscreteFrechetDistance-EiterMannila}
Eiter, T., Mannila, H.: Computing discrete {Fr{\'e}chet} distance. Tech. rep.,
  Christian Doppler Labor f\"ur Expertensyteme, Technische Universit\"at Wien
  (1994)

\bibitem{2006-SODA-RankSelectOperationsOnLargeAlphabetsAToolForText-GolynskiMunroRao}
Golynski, A., Munro, J.I., Rao, S.S.: Rank/select operations on large
  alphabets: A tool for text indexing. In: Proceedings of the Seventeenth
  Annual ACM-SIAM Symposium on Discrete Algorithm (SODA). pp. 368--373. SODA
  '06, Society for Industrial and Applied Mathematics, Philadelphia, PA, USA
  (2006)

\bibitem{1995-IJCGA-ComputingTheFrechetDistanceBetweenTwoPolygonalCurves-AltGodau}
H., A., M., G.: Computing the {Fr\'echet} distance between two polygonal
  curves. International Journal of Computational Geometry and Applications
  (IJCGA)  \textbf{5}(1--2),  75--91 (1995)

\bibitem{gutembergProject}
Hart, M.: Gutemberg project. Online at \url{https://www.gutenberg.org/} (last
  accessed on [2018-05-27 Sun])

\bibitem{2015-TCS-UsingSwapsAndDeletesToMakeStringsMatch-Meister}
Meister, D.: Using swaps and deletes to make strings match. Theoretical
  Computer Science (TCS)  \textbf{562}(0),  606 -- 620 (2015)

\bibitem{1992-ACJ-AnOverviewOfAdaptiveSorting-MoffatPetersson}
Moffat, A., Petersson, O.: An overview of adaptive sorting. Australian Computer
  Journal (ACJ)  \textbf{24}(2),  70--77 (1992)

\bibitem{2013-Master-TheBinaryStringtoStringCorrectionProblem-Spreen}
Spreen, T.D.: The Binary String-to-String Correction Problem. Master's thesis,
  University of Victoria, Canada (2013)

\bibitem{1974-JACM-TheStringToStringCorrectionProblem-WagnerFischer}
Wagner, R.A., Fischer, M.J.: The string-to-string correction problem. Journal
  of the ACM (JACM)  \textbf{21}(1),  168--173 (1974)

\bibitem{1975-JACM-AnExtensionOfTheStringToStringCorrectionProblem-WagnerLowrance}
Wagner, R.A., Lowrance, R.: An extension of the string-to-string correction
  problem. Journal of the ACM (JACM)  \textbf{22}(2),  177--183 (1975)

\bibitem{1975-STOC-OnTheComplexityOfTheExtendedStringToStringCorrectionProblem-Wagner}
Wagner, R.A.: On the complexity of the extended string-to-string correction
  problem. In: Proceedings of the annual ACM Symposium on Theory Of Computing
  (STOC). pp. 218--223. STOC '75, ACM (1975)

\bibitem{gutembergProjectWikipedia}
Wikipedia: Project gutenberg. Online at
  \url{https://en.wikipedia.org/wiki/Project\_Gutenberg} (last accessed on
  [2018-05-27 Sun])

\bibitem{2017-wikipedia-Parikhtheorem}
Wikipedia, {\tt Website}.:
  \href{https://en.wikipedia.org/wiki/Parikh%27s_theorem}{Parikh's theorem},
  last accessed on 2017-05-08.

\bibitem{1999-BOOK-ManagingGigabytes-WittenMoffatBell}
Witten, I.H., Moffat, A., Bell, T.C.: Managing gigabytes : compressing and
  indexing documents and images. The Morgan Kaufmann series in multimedia
  information and systems, San Francisco, Calif. Morgan Kaufmann Publishers
  (1999)

\end{thebibliography}

\newpage
\appendix
\begin{center}
{\huge \textbf{APPENDIX}}
\end{center}

In this appendix, we briefly discuss some minor topics, such as how the algorithm \texttt{Local Insertion Sort} described and analyzed by Moffat and Petersson~\cite{1992-ACJ-AnOverviewOfAdaptiveSorting-MoffatPetersson} combined with an index supporting the \texttt{rank} and \texttt{select} operators potentially yields a faster computation of the \textsc{Swap Edit Distance} (Section~\ref{sec:swap}), or how to combine techniques that take advantage of the \texttt{Parikh vectors} of the input strings with techniques that take advantage of the output distance (Section~\ref{sec:distanceAdaptivity}).

\section{Adaptive Computation of the \textsc{Swap Edit Distance}}
\label{sec:swap}

Out of the $2^4-1=15$ non trivial distances which can be obtained from the four operators \texttt{Delete}, \texttt{Insert}, \texttt{Replace} and \texttt{Swap}, the \textsc{Swap Edit Distance} is the simplest for which no linear time algorithm is known: Wagner and Lowrance~\cite{1975-JACM-AnExtensionOfTheStringToStringCorrectionProblem-WagnerLowrance} described a dynamic program to compute it in time within $O(n^2)$. As the \texttt{swap} operator does not remove nor insert any symbol, the \textsc{Swap Edit Distance} between two strings of distinct lengths, or of same lengths but with distinct \texttt{Parikh vectors}, is always infinite. Such cases can be checked in time within $O(n+m+\sigma)$, and can be ignored as degenerated cases.

The computation of the \textsc{Swap Edit Distance} between two strings $S,T\in[1..\sigma]^n$ of same \texttt{Parikh vector} $(n_i)_{i\in[1..\sigma]}$ is quite similar to the problem of \textsc{Sorting} one string into the other via the exchange of consecutive elements: the \texttt{swap} operator is merely reordering the symbols of the strings, and the combined actions of all the \texttt{swap} operations can be summarized by a simple permutation over $[1..n]$. Consider the shortest sequence of such \texttt{swap} operators transforming $S$ into $T$, and $\pi$ the corresponding permutation. The number $d$ of inversions in $\pi$, defined as the number of pairs $i,j\in[1..n]$ of positions $i<j$ such that $\pi[j]<\pi[j]$ (i.e. the order of $(i,j)$ is inverse to that of $(\pi[i],\pi[j])$), is exactly the number $d$ of \texttt{swaps} required to ``reorder'' $S$ into $T$, i.e. the \textsc{Swap Edit Distance} $d$ from $S$ to $T$.

Moffat and Petersson~\cite{1992-ACJ-AnOverviewOfAdaptiveSorting-MoffatPetersson} described two sorting algorithms adaptive to the number $d$ of inversions of the input.  The first one is the classical sorting algorithm \texttt{Insertion Sort}, which sorts an array $A$ with $d$ inversions using only the operator \texttt{swaps}, using $O(d)\subseteq O(n^2)$ comparisons and \texttt{swaps}. The second one is the adaptive algorithm \texttt{Local Insertion Sort}, based on a \texttt{Finger Search Tree}, which uses only $O(n(1+\lg(d/n)))$ comparisons and can be easily modified to \emph{count the number $d$ of inversions}, i.e. the \textsc{Swap Edit Distance} $d$ between an array $A$ and its sorted version.  We describe below how to take advantage of this algorithm to compute the \textsc{Swap Edit Distance} between two arbitrary strings.

First, consider the one-to-one mapping between positions in $S$ and positions in $T$:
\begin{lemma}
Given two strings $S,T\in[1..\sigma]^n$ of same \texttt{Parikh vector}, the $i$-th symbol $\alpha$ in $S$ is mapped to the $i$-th symbol $\alpha$ in $T$ by the shortest sequence of \texttt{Swap operations transforming $S$ into $T$.}
\end{lemma}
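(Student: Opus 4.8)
The plan is to translate the combinatorial claim about swap sequences into a statement about inversion-minimizing matchings between equal symbols, and then to establish optimality of the order-preserving matching by a single uncrossing exchange argument.

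\textbf{Step 1 (reduction to matchings).} First I would make precise the bijection induced by a swap sequence. Tracking each occurrence as a physical token, any sequence of adjacent \texttt{swap} operations transforming $S$ into $T$ induces a permutation $\pi$ of $[1..n]$ with $T[\pi(i)]=S[i]$ for every $i$; I call such a $\pi$ a \emph{valid matching}, since it sends each position of $S$ to a position of $T$ carrying the same symbol. As already noted in the discussion preceding this lemma, the length of the sequence equals the number $\mathrm{inv}(\pi)$ of inversions of $\pi$, and conversely every valid matching $g$ can be realized by exactly $\mathrm{inv}(g)$ \texttt{swaps}. Hence the length of a shortest sequence equals $\min_g \mathrm{inv}(g)$ taken over valid matchings $g$, and the permutation induced by \emph{any} shortest sequence attains this minimum.

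\textbf{Step 2 (uncrossing exchange).} The core of the argument is to show that an inversion-minimizing valid matching never ``crosses'' two equal symbols. Suppose, toward a contradiction, that $\pi$ is optimal yet there exist positions $p<q$ in $S$ with $S[p]=S[q]=\alpha$ and $\pi(p)>\pi(q)$. I would define $\pi'$ to agree with $\pi$ everywhere except $\pi'(p)=\pi(q)$ and $\pi'(q)=\pi(p)$; since both positions carry $\alpha$, the permutation $\pi'$ is again a valid matching. It then remains to check $\mathrm{inv}(\pi')<\mathrm{inv}(\pi)$ by comparing, pair by pair, the contributions of $\pi$ and $\pi'$. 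The pair $(p,q)$ itself switches from an inversion to a non-inversion; for any third position $r$, writing $b=\pi(p)$, $a=\pi(q)$, $c=\pi(r)$ (with $a<b$ and $c\notin\{a,b\}$), the three cases $r<p$, $r>q$, and $p<r<q$ are handled by a short sign analysis of $[b>c]+[c>a]$ versus $[a>c]+[c>b]$, which shows the count is unchanged when $r$ lies outside $(p,q)$ and can only decrease (by $2$, when $a<c<b$) otherwise. Summing over all pairs, $\mathrm{inv}(\pi')\le\mathrm{inv}(\pi)-1$, contradicting optimality of $\pi$.

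\textbf{Step 3 (conclusion) and main obstacle.} A valid matching with no crossing inside any symbol class is forced to be order-preserving within each class, hence \emph{unique}: it sends the $i$-th occurrence of every symbol $\alpha$ in $S$ to the $i$-th occurrence of $\alpha$ in $T$. By Step 2 this order-preserving matching $\pi^\star$ is the unique inversion-minimizer, and by Step 1 every shortest swap sequence induces an inversion-minimizing valid matching; therefore every shortest sequence induces $\pi^\star$, which is exactly the claim. I expect the only delicate point to be the middle case $p<r<q$ of the exchange argument, where one must verify that swapping the two images never creates more new inversions than it destroys; the remaining cases are immediate because $r$ then sits on one side of both positions. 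A secondary care point is to keep the reduction in Step 1 tight in both directions, so that uniqueness of the inversion-minimizer genuinely transfers to uniqueness of the matching realized by a shortest sequence.
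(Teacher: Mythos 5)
Your proposal is correct and takes essentially the same route as the paper: both arguments rest on the observation that a shortest swap sequence induces a one-to-one matching between equal symbols, and that any matching which crosses two occurrences of the same symbol can be uncrossed to yield a strictly shorter sequence. The only difference is one of rigor — your Step 2 supplies the explicit inversion-count exchange argument (the case analysis over a third position $r$) that the paper merely asserts in one sentence, so your write-up is a fleshed-out version of the paper's sketch rather than a different proof.
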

\begin{proof}
As \texttt{Swap} is the only operator available, no symbol is added or removed from $S$ to obtain $T$, so that there is a one to one mapping between each symbol of $S$ and each symbol of $T$. Moreover, any sequence of \texttt{Swap} operation matching the $i$-th symbol $\alpha$ in $T$ to the $j$-th occurrence of $\alpha$ in $T$ can be made shorter if $j\neq i$.
\end{proof}

Then, consider how to compute the \textsc{Swap Edit Distance} using such a mapping and an index supporting the \texttt{rank} and \texttt{select} operators:
\begin{theorem} \label{th:SwapRankSelect} Given two strings $S,T\in[1..\sigma]^n$ of same \texttt{Parikh vector} $(n_i)_{i\in[1..\sigma]}$, there is an algorithm computing the \textsc{Swap Edit Distance} from $S$ to $T$ via $O(n(1+\lg(d/n))$ \texttt{rank} and \texttt{select} operations.
\end{theorem}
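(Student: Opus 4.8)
The plan is to reduce the computation of the \textsc{Swap Edit Distance} to counting the inversions of a single permutation on $[1..n]$, which we build directly from the \texttt{rank} and \texttt{select} indices. First I would make explicit the permutation $\pi$ guaranteed by the preceding Lemma: for each position $i\in[1..n]$, set $\alpha=S[i]$, let $k=\mathtt{rank}(S,i,\alpha)$ be the occurrence index of this symbol within $S$, and define $\pi(i)=\mathtt{select}(T,k,\alpha)$. By the Lemma, $\pi$ sends the $k$-th occurrence of each symbol in $S$ to the $k$-th occurrence of that same symbol in $T$, so $\pi$ is exactly the permutation realized by the shortest sequence of \texttt{swap} operators transforming $S$ into $T$. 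Consequently, as recalled before the statement, the number $d$ of inversions of $\pi$ equals the \textsc{Swap Edit Distance} from $S$ to $T$, so it suffices to count the inversions of $\pi$.

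Second, I would bound the number of index operations. Computing a single value $\pi(i)$ costs one \texttt{rank} and one \texttt{select} call, so precomputing the whole array $(\pi(1),\ldots,\pi(n))$ costs $n$ \texttt{rank} and $n$ \texttt{select} operations, that is $O(n)$ in total. Once $\pi$ is stored as an array of integers, I would run the \texttt{Local Insertion Sort} of Moffat and Petersson on it, using the variant that counts inversions while sorting: by their analysis this performs $O(n(1+\lg(d/n)))$ comparisons, each comparison being a comparison of two precomputed integers and hence free of any further \texttt{rank} or \texttt{select} call. The algorithm returns the number $d$ of inversions, which is the sought distance. Adding the two phases, the only \texttt{rank} and \texttt{select} operations are those of the precomputation, giving $O(n)\subseteq O(n(1+\lg(d/n)))$ index operations, as claimed.

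The hard part is not the operation count, which is essentially immediate once $\pi$ is available, but justifying that the inversion count of this particular $\pi$ coincides with the \textsc{Swap Edit Distance}. This rests on two facts that must be combined cleanly: the one-to-one occurrence-matching established in the Lemma, which fixes $\pi$ uniquely and rules out any cheaper, non-order-preserving matching of equal symbols; and the standard identity between the minimum number of adjacent transpositions sorting a permutation and its number of inversions. A minor technical point to verify is that \texttt{Local Insertion Sort}, which sorts by comparisons through a \texttt{Finger Search Tree} rather than by physically performing swaps, can indeed be instrumented to report the exact inversion count $d$ of its input; this is precisely the modification stated by Moffat and Petersson, which I would invoke rather than re-derive.
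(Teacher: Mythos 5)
Your proof is correct, and its skeleton is the same as the paper's: both rely on the preceding lemma to fix the occurrence-preserving matching between $S$ and $T$, identify the \textsc{Swap Edit Distance} with the inversion count of the induced permutation $\pi$, and obtain that count from Moffat and Petersson's \texttt{Local Insertion Sort} instrumented to count inversions. The one genuine difference is \emph{where} the index is consulted. The paper never materializes $\pi$: it runs \texttt{Local Insertion Sort} on $S$ directly and answers each comparison between $\pi(i)$ and $\pi(j)$ on the fly with a constant number of \texttt{rank} and \texttt{select} calls, so the number of index operations is tied to the number of comparisons, i.e.\ within $O(n(1+\lg(d/n)))$. You instead batch the index work into a precomputation pass that stores $\pi(1),\ldots,\pi(n)$ using $2n$ \texttt{rank}/\texttt{select} calls, after which the sort compares plain integers. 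This proves a slightly stronger statement than the theorem claims --- only $O(n)$ index operations --- and it decouples the index cost from the sorting cost, which would sharpen the paper's corollaries: for instance, with the posting-list implementation the comparison-model bound becomes $O(n\lg n + n(1+\lg(d/n)))$ rather than $O(n(1+\lg(d/n))\lg n)$, and since your \texttt{rank} queries on $S$ are monotone in position, doubling search amortizes them to $O(n)$ overall, a refinement the paper explicitly leaves as a remark after its first corollary. The price is $\Theta(n)$ extra words to store $\pi$, which is negligible because the finger search tree underlying \texttt{Local Insertion Sort} already uses linear space. Both your argument and the paper's lean on the same external fact --- that \texttt{Local Insertion Sort} can be modified to report the exact inversion count of its input --- which you correctly flag as invoked from Moffat and Petersson rather than re-derived.
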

\begin{proof}
Define the following process to decide if the symbols at positions $i$ and $j$ in $S$ must be inversed during the transformation of $S$ into $T$ minimizing the number of \texttt{Swap} operations:
Let $a=S[i]$ and $b=S[j]$, so that $i$ and $j$ are the $\mathtt{rank}(S,a,i)$-th and $\mathtt{rank}(S,b,j)$-th occurences of $a$ and $b$ in $S$, respectively.
Let $i'=\mathtt{select}(T,a,\mathtt{rank}(S,a,i))$ and $j'=\mathtt{select}(T,b,\mathtt{rank}(S,b,j))$ be the positions of the corresponding occurences in $T$.
The symbols will be inversed during the transformation of $S$ into $T$ if and only if $(i,j)$ and $(i',j')$'s orders are inversed.

``Sorting'' $S$ into $T$ using the algorithm \texttt{Local Insertion Sort} described by Moffat and Petersson~\cite{1992-ACJ-AnOverviewOfAdaptiveSorting-MoffatPetersson} and the process described above to answer comparisons between $\pi(i)$ and $\pi(j)$ yields an algorithm computing the \textsc{Swap Edit Distance} using within $O(n(1+\lg(d/n))$ \texttt{rank} and \texttt{select} operarions. 
\end{proof}

This yields as many solutions as there are data structures to support the \texttt{rank} and \texttt{select} operators, each yielding a distinct computational tradeoff on the previous lemma: we describe two.  
The first one is based on \emph{inverted posting lists}~\cite{1999-BOOK-ManagingGigabytes-WittenMoffatBell} and an amortized analysis of \texttt{doubling search} algorithm~\cite{1976-IPL-AnAlmostOptimalAlgorithmForUnboundedSearching-BentleyYao}:

\begin{corollary}
Given two strings $S,T\in[1..\sigma]^n$ of same \texttt{Parikh vector} $(n_i)_{i\in[1..\sigma]}$, there is an algorithm computing the \textsc{Swap Edit Distance} from $S$ to $T$ in time within $O(n(1+\lg(d/n))\lg(n))\subset O(n^2)$ in the comparison based decision tree model.
\end{corollary}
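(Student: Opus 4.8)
The plan is to combine Theorem~\ref{th:SwapRankSelect}, which reduces the whole computation to $O(n(1+\lg(d/n)))$ \texttt{rank} and \texttt{select} operations, with the concrete \emph{inverted posting list} implementation~\cite{1999-BOOK-ManagingGigabytes-WittenMoffatBell} of these operators. First I would build, in time within $O(n+\sigma)$, for each of $S$ and $T$ one sorted array per symbol $\alpha\in[1..\sigma]$ listing the positions of the occurrences of $\alpha$; the array for $\alpha$ has size $n_\alpha$, so the total space is within $O(n+\sigma)$. On such a structure \texttt{select}$(X,k,\alpha)$ is a direct access to the $k$-th entry of the array of $\alpha$, answered in constant time, whereas \texttt{rank}$(X,i,\alpha)$ reduces to a \textsc{Sorted Search} in that same array.

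Given this, the bulk of the running time comes from the \texttt{rank} operations, since each \texttt{select} costs $O(1)$. I would partition the $q_\alpha$ \texttt{rank} queries concerning symbol $\alpha$: each is a sorted search in an array of size $n_\alpha\le n$, costing $O(\lg n_\alpha)\subseteq O(\lg n)$ via a \texttt{Binary Search}, or costing within $O(q_\alpha\lg(n_\alpha/q_\alpha))$ amortized over the $q_\alpha$ queries via a \texttt{Doubling Search}~\cite{1976-IPL-AnAlmostOptimalAlgorithmForUnboundedSearching-BentleyYao}. Summing over $\alpha$ and using the crude bound $\lg(n_\alpha/q_\alpha)\le\lg n_\alpha\le\lg n$, the total search cost is within $O(\sum_\alpha q_\alpha\lg n)=O(n(1+\lg(d/n))\lg n)$, which is exactly the claimed bound. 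To justify the inclusion $O(n(1+\lg(d/n))\lg n)\subset O(n^2)$, I would observe that the number $d$ of inversions of a permutation of $[1..n]$ is at most $n(n-1)/2$, so $d/n=O(n)$ and $1+\lg(d/n)=O(\lg n)$; hence the whole expression is within $O(n\lg^2 n)$, which is $o(n^2)$.

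The delicate step will be the amortized \texttt{Doubling Search} bound $O(q_\alpha\lg(n_\alpha/q_\alpha))$, which is only valid when the $q_\alpha$ queries reaching a given posting list arrive in monotone order, a property that \texttt{Local Insertion Sort} does not obviously guarantee for the comparison partners it examines as it descends the \texttt{Finger Search Tree}. For the worst-case statement of the corollary this is not an actual obstacle: one may always fall back on a plain \texttt{Binary Search}, which answers every query in $O(\lg n)$ regardless of their order and still yields the $O(n(1+\lg(d/n))\lg n)$ bound, so the \texttt{Doubling Search} is invoked chiefly for its better behaviour in practice. The only remaining care is to confirm that the constant number of \texttt{rank} and \texttt{select} calls performed per comparison in Theorem~\ref{th:SwapRankSelect} is absorbed into the $O(\cdot)$ without inflating the total count of operations beyond $O(n(1+\lg(d/n)))$.
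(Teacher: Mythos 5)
Your proposal is correct and follows essentially the same route as the paper's proof, which likewise combines Theorem~\ref{th:SwapRankSelect} with the inverted posting list implementation (constant-time \texttt{select}, \texttt{rank} by sorted search) and the crude $O(\lg n)$ per-query bound. Your additional observations --- that the amortized \texttt{Doubling Search} bound needs monotone queries and that a plain \texttt{Binary Search} suffices for the stated worst-case bound, and the explicit verification that $d\le n(n-1)/2$ justifies the inclusion in $O(n^2)$ --- only make explicit what the paper leaves implicit (indeed, the paper itself remarks right after the corollary that $O(\lg n)$ is a crude bound whose refinement is left open).
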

\begin{proof}
A simple combination of Theorem~\ref{th:SwapRankSelect} with the classical \emph{inverted posting list} implementation~\cite{1999-BOOK-ManagingGigabytes-WittenMoffatBell} of an index supporting the \texttt{select} operator in constant time and the \texttt{rank} operator via \emph{doubling search}\cite{1976-IPL-AnAlmostOptimalAlgorithmForUnboundedSearching-BentleyYao}.
\end{proof}

Note that it should be possible to refine the analysis, as $O(\lg n)$ is a very crude upper bound on the complexity of supporting \texttt{rank} or \texttt{select}, one should be able 
to express it in function of $n_\alpha$, and
to amortize it over all \texttt{rank} and \texttt{select} operations for each symbol $\alpha$.

The second one is based on the more sophisticated succinct data structure described by Golynski~\etal~\cite{2006-SODA-RankSelectOperationsOnLargeAlphabetsAToolForText-GolynskiMunroRao}:
\begin{corollary}
Given two strings $S,T\in[1..\sigma]^n$, there is an algorithm computing the \textsc{Swap Edit Distance} from $S$ to $T$ in time within $O(n(1+\lg(d/n))\lg\lg\sigma)$.
\end{corollary}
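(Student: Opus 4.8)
The plan is to combine Theorem~\ref{th:SwapRankSelect} with the succinct index of Golynski~\etal~\cite{2006-SODA-RankSelectOperationsOnLargeAlphabetsAToolForText-GolynskiMunroRao}, in the same spirit as the preceding corollary combines it with inverted posting lists, the only difference being which data structure supplies the \texttt{rank} and \texttt{select} operators. First I would dispose of the degenerate case not covered by the hypothesis of Theorem~\ref{th:SwapRankSelect}: since the \texttt{Swap} operator neither inserts nor deletes symbols, the \textsc{Swap Edit Distance} between $S$ and $T$ (which already have equal length $n$) is infinite whenever their \texttt{Parikh vectors} differ. This condition is tested in time linear in the input by computing and comparing the two \texttt{Parikh vectors}, a cost dominated by the claimed bound; once it is ruled out, I may invoke Theorem~\ref{th:SwapRankSelect} under its same-\texttt{Parikh-vector} hypothesis.

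Next I would build the index of Golynski~\etal\ on both $S$ and $T$, which takes time linear in $n$ and is therefore absorbed by the factor $n(1+\lg(d/n))\geq n$. With both indices in place, Theorem~\ref{th:SwapRankSelect} guarantees that the \textsc{Swap Edit Distance} is obtained through $O(n(1+\lg(d/n)))$ calls to the \texttt{rank} and \texttt{select} operators. Since the structure of Golynski~\etal\ answers each such query in time within $O(\lg\lg\sigma)$ in the RAM model, multiplying the number of operations by their per-query cost yields the announced running time $O(n(1+\lg(d/n))\lg\lg\sigma)$.

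There is no genuine obstacle here: the entire content of the statement is carried by Theorem~\ref{th:SwapRankSelect} together with the external guarantees on the index of Golynski~\etal, and the argument reduces to a bookkeeping multiplication. The only point that deserves a moment of care is that the $O(\lg\lg\sigma)$ charge is a \emph{worst-case} per-query bound, which is exactly what lets me multiply it term-by-term against the operation count without resorting to the amortized analysis needed in the comparison-based corollary; it also already covers both the \texttt{rank} and the \texttt{select} queries issued by the comparison procedure underlying Theorem~\ref{th:SwapRankSelect}.
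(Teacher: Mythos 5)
Your proposal is correct and follows exactly the paper's own argument: the paper proves this corollary as ``a simple combination of Theorem~\ref{th:SwapRankSelect} with the index described by Golynski~\etal'' supporting \texttt{rank} and \texttt{select} in $O(\lg\lg\sigma)$ time, which is precisely your multiplication of the operation count by the per-query cost. Your additional bookkeeping (ruling out the mismatched-\texttt{Parikh-vector} case in linear time and noting that index construction is absorbed by the bound) is sound and, if anything, slightly more careful than the paper's one-line proof.
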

\begin{proof}
A simple combination of Theorem~\ref{th:SwapRankSelect} with the index described by Golynski~\etal~\cite{2006-SODA-RankSelectOperationsOnLargeAlphabetsAToolForText-GolynskiMunroRao} to support the \texttt{rank} and \texttt{select} operators.
\end{proof}

Next, we discuss the minor topic of combining techniques that take advantage of the \texttt{Parikh vectors} of the input strings with techniques that take advantage of the output distance.

\section{Distance Adaptive Computation for all Edit Distances}
\label{sec:distanceAdaptivity}

Abu-Khzam et al.~\cite{2011-DO-ChargeAndReduceAFixedParameterAlgorithmForStringToStringCorrection-AbukhzamFernauLangstonLeeculturaStege} described an algorithm computing the \textsc{Insert Swap Edit Distance} $d$ from $S$ to $T$ in time within $O({1.6181}^d m)$, which is adaptive to the distance $d$.  We show here that a similar technique can be applied to the other edit distances based on the operators \texttt{Delete}, \texttt{Insert} and \texttt{Replace}, so that to obtain a complexity adaptive to the distance $d$ being computed (Section~\ref{sec:distanceAdaptiveComputation}) and to combine this technique with the ones described previously (Section~\ref{sec:CombinationWithOtherAdaptiveTechniques}).

\subsection{Distance Adaptive Computation}
\label{sec:distanceAdaptiveComputation}

\begin{lemma}
For any edit distance based on a subset of the set of operators $\{$\texttt{Delete}, \texttt{Insert}, \texttt{Replace} $\}$, there is an algorithm which checks that the distance $d$ from a source string $S\in[1..\sigma]^n$ to a target string $T\in[1..\sigma]^m$ is smaller than a promise $D\in[0..\max\{n,m\}]$ (i.e. if $d\leq D$) in time within $O(D\min\{n,m\})$ and space within $O(n+m)$.
\end{lemma}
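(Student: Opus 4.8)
The plan is to use the classical band (or ``diagonal'') restriction of the dynamic programming matrix, in the spirit of Ukkonen's algorithm, combined with the elementary length-difference lower bound on the edit distance.

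First I would fix the subset of operators and observe that all the recurrences considered share two structural properties of the associated matrix $M[i,j]$, the distance between the length-$i$ prefix of $S$ and the length-$j$ prefix of $T$. \textbf{(a)} We have $M[i,j]\ge |i-j|$, because \texttt{Replace} preserves the length while each \texttt{Insert} or \texttt{Delete} changes it by exactly one, so turning a length-$i$ string into a length-$j$ string requires at least $|i-j|$ operations of nonzero cost; this holds for \emph{every} subset, since forbidding an operator can only increase the distance (to $+\infty$ in the degenerate cases). \textbf{(b)} Every cell $(i,j)$ lying on some optimal alignment from $(0,0)$ to $(n,m)$ satisfies $M[i,j]\le d$, because the initial segment of that alignment is itself an alignment of the two prefixes whose cost is at most the total cost $d$ (all step costs being nonnegative).

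Second, I would establish the band claim: if $d\le D$, then any optimal alignment visits only cells of the band $B=\{(i,j):|i-j|\le D\}$. Indeed, combining (a) and (b) gives $|i-j|\le M[i,j]\le d\le D$ for every visited cell. Consequently, running the recurrence but treating every cell outside $B$ as having value $+\infty$ leaves $M[n,m]$ unchanged when $d\le D$, and can only overestimate it otherwise; hence the restricted value is $\le D$ if and only if $d\le D$, which is exactly what the algorithm must decide. A preliminary test disposes of the trivial case: if $|n-m|>D$ then $d\ge |n-m|>D$ and the answer is immediately ``no''.

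Third, I would bound the resources. The band $B$ is the set of the $2D+1$ diagonals $i-j=k$ with $|k|\le D$, and each diagonal of the $(n{+}1)\times(m{+}1)$ grid contains at most $\min\{n,m\}+1$ cells, so $|B|=O(D\min\{n,m\})$; since each cell is computed in constant time from a constant number of neighbours, I would order the evaluation (by anti-diagonals, or by rows after assuming without loss of generality $n\le m$, using the symmetry between the \textsc{Delete Replace} and \textsc{Insert Replace Edit Distances}) so that each band cell is touched once, giving time $O(D\min\{n,m\})$. Retaining only the two most recent rows (or diagonals) of the band, each of width $O(D)\subseteq O(\max\{n,m\})\subseteq O(n+m)$, yields space $O(n+m)$. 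The main point to get right is the correctness of the pruning, which rests entirely on inequality (a) together with the certificate bound (b) guaranteeing that no optimal path is discarded when out-of-band cells are set to $+\infty$; a secondary subtlety is securing $\min\{n,m\}$ rather than $\max\{n,m\}$ in the time bound, handled either by counting per diagonal or by exploiting the length-symmetry of these distances.
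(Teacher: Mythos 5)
Your proof is correct and follows exactly the intended approach: the paper itself states this lemma \emph{without} proof, offering only (in a suppressed figure) the remark that cells at distance more than $D$ from the diagonal need not be computed, and your Ukkonen-style band argument --- resting on the two inequalities $|i-j|\le M[i,j]$ and $M[i,j]\le d$ along an optimal alignment --- is the standard way to make that remark rigorous. Your write-up also supplies the details the paper leaves implicit (correctness of the $+\infty$ pruning, the per-diagonal cell count yielding $O(D\min\{n,m\})$ rather than $O(D\max\{n,m\})$, and the two-row $O(n+m)$ space bound), all of which are sound.
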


\begin{theorem}
For any edit distance  based on a subset of the set of operators $\{$\texttt{Delete}, \texttt{Insert}, \texttt{Replace} $\}$, 
there is an algorithm which computes this distance $d$ from a source string $S\in[1..\sigma]^n$ to a target string $T\in[1..\sigma]^m$ 
in time within $O(d\min\{n,m\})$ and space within $O(n+m)$.
\end{theorem}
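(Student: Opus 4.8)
The plan is to invoke the checker of the preceding lemma as a black box and wrap it in an exponential (doubling) search over the promise $D$, so that the total work is dominated by the last and most expensive call. First I would run the checker with promise $D=1$, and if it fails (reporting $d>D$), double the promise to $D=2,4,8,\ldots$, stopping at the first value $D^\ast=2^k$ for which the check succeeds. Since the banded dynamic program underlying the lemma computes the exact distance whenever $d\le D$, this successful call also returns the exact value of $d$, so no separate extraction step is needed; the degenerate cases $n=0$ or $m=0$ are answered directly, and $d=0$ is caught already at $D=1$.

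Next I would bound the running time. Let $2^k$ be the smallest power of two that is at least $d$; then $2^{k-1}<d$, hence $2^k<2d$ and every promise tested stays within the admissible range $[0..\max\{n,m\}]$ required by the lemma. The successive calls cost $O(2^i\min\{n,m\})$ for $i=0,1,\ldots,k$, and these form a geometric series:
$$\sum_{i=0}^{k} O(2^i\min\{n,m\}) = O(2^{k+1}\min\{n,m\}) = O(d\min\{n,m\}),$$
which matches the claimed bound. For the space bound, I would note that the calls are performed sequentially, each using only the $O(n+m)$ working space guaranteed by the lemma, which is reused across calls; the only extra state is the current promise and the running distance, both $O(1)$.

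The main obstacle I anticipate is the dependence on the \emph{exact-computation} property of the checker: the lemma is phrased as a decision procedure, so I must argue (or guarantee in its construction) that a positive answer is accompanied by the exact value of $d$, rather than a mere Boolean. If only a yes/no answer were available, doubling would localize $d$ to an interval $(D^\ast/2,D^\ast]$, but recovering $d$ exactly by binary search inside that interval would query promises of size $\Theta(d)$ a total of $\Theta(\lg d)$ times, incurring a spurious $O(\lg d)$ factor and yielding $O(d\min\{n,m\}\lg d)$ instead of the target bound. Avoiding this overhead is precisely the reason the banded dynamic program of the lemma must be set up so that, on success, it outputs the distance itself; I would make this observation explicit so that the telescoping argument above delivers the exact value at no asymptotic cost.
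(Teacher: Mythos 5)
The paper gives no proof of this theorem at all: it is stated bare, immediately after the (also unproven) promise-checking lemma, and the only hint of the intended argument is a suppressed figure caption in the source noting that cells at distance more than $D$ from the diagonal need not be computed. Your proposal therefore supplies the missing argument rather than competing with one, and the argument you give --- exponential escalation of the promise, the geometric-series bound dominated by the last call, reuse of the $O(n+m)$ working space, and the key observation that the banded dynamic program returns the exact distance (not a mere Boolean) whenever $d\le D$, which is what avoids a spurious $\lg d$ factor --- is exactly the classical doubling technique that the lemma-then-theorem structure is designed for. Your point that the lemma must be \emph{constructed} so that success yields the value of $d$ is well taken; it is correct for the banded program (any editing path of cost at most $D$ never leaves the band, since each off-diagonal step costs at least $1$), and it is a gap the paper itself would need to close.

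One step of your write-up is false as stated, and it is not merely cosmetic. From $2^k<2d$ you conclude that every tested promise lies in the admissible range $[0..\max\{n,m\}]$ required by the lemma; this does not follow, and it genuinely fails for the \textsc{Delete Insert Edit Distance}, where $d$ can be as large as $n+m$ (two strings over disjoint alphabets need $n$ deletions and $m$ insertions), so $d$ itself --- let alone the final promise --- can exceed $\max\{n,m\}$. Used strictly as a black box with its stated promise range, the lemma then cannot even be invoked on the calls your search requires. The repair is easy but must be made explicit: either observe that the lemma's range restriction is inessential and extend it to $D\in[0..n{+}m]$ (for $D\ge n+m$ the band covers the whole matrix and the check is the full dynamic program), or cap each tested promise at $n+m$; the geometric-series bound survives either fix. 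Relatedly, for operator subsets under which the distance can be infinite (e.g., \{\texttt{Delete}, \texttt{Replace}\} with $n<m$, or \{\texttt{Replace}\} with $n\neq m$) the search needs an explicit termination rule (stop once $D\ge n+m$ and report that no finite distance exists), and when $d=0$ the first call already costs $\Theta(\min\{n,m\})$, so the honest bound is $O((d{+}1)\min\{n,m\})$; these last two are blemishes of the theorem statement itself, but a complete proof should acknowledge them.
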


\begin{INUTILE}
\begin{figure}
\centering
\begin{minipage}{.3\linewidth}
$$
\begin{array}{c|*{9}{c}}
      & 1        &          &    &                          & i        &          &          & n        \\ \hline
  1   & d(1,1)                                                                                                       \\
               && d(2,2) &                                                                                              \\
                          &&& \dots                                                                                    \\
i     &          &          &        &               & d(i,i)        &          &          &          \\
 &&&&&& \dots                                                                                                              \\
                                                                                                              \\
 m    &                                                                                                       \\
\end{array}$$
\end{minipage}
\caption{If the edit distance from $S$ to $T$ is at most $D$, it is not necessary to compute ``cells'' at distance more than $D$ from the diagonal. \label{fig:diagonal}}
\end{figure}
\end{INUTILE}

\subsection{Combination with Other Adaptive Techniques}
\label{sec:CombinationWithOtherAdaptiveTechniques}

\begin{corollary}
There is an algorithm which computes this \textsc{Delete Insert Edit Distance} $d$ from a source string $S\in[1..\sigma]^n$ to a target string $T\in[1..\sigma]^m$ in time within $O(d\min\{n',m'\})$ and space within $O(n+m)$, where $n'=\sum_{\alpha, m_\alpha>0}n_\alpha$ and $m'=\sum_{\alpha, n_\alpha>0}m_\alpha$.
\end{corollary}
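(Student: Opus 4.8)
The plan is to \emph{compose} the two forms of adaptivity already established: the distance adaptivity of the Theorem of Section~\ref{sec:distanceAdaptiveComputation} (running time $O(d\min\{n,m\})$ for any distance based on a subset of $\{$\texttt{Delete},\texttt{Insert},\texttt{Replace}$\}$), applied not to $S$ and $T$ directly but to their projections $S'$ and $T'$ onto the common effective alphabet, exactly as in the pedagogical argument of Section~\ref{sec:InsertDeletePedagogicalExample}. First I would let $S'$ be obtained from $S$ by deleting every occurrence of a symbol $\alpha$ with $m_\alpha=0$, and $T'$ from $T$ by deleting every occurrence of a symbol $\alpha$ with $n_\alpha=0$; then $|S'|=n'=\sum_{\alpha,\,m_\alpha>0}n_\alpha$ and $|T'|=m'=\sum_{\alpha,\,n_\alpha>0}m_\alpha$, matching the quantities in the statement.

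The key identity to establish is $d_{DI}(S,T)=d_{DI}(S',T')+(n-n')+(m-m')$. I would prove it through the \textsc{Longest Common Sub Sequence}: since any common subsequence of $S$ and $T$ uses only symbols occurring in both strings, it survives the projection untouched, so $\idtt{LCSS}(S,T)=\idtt{LCSS}(S',T')$; combining this with $d_{DI}(X,Y)=|X|+|Y|-2\,\idtt{LCSS}(X,Y)$ for both pairs and subtracting yields the identity. In particular $d':=d_{DI}(S',T')\le d_{DI}(S,T)=:d$, and the symbols removed by the projection contribute exactly the fixed cost $(n-n')+(m-m')$.

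The algorithm is then: build the two \texttt{Parikh vectors} and, from them, the projections $S'$ and $T'$; run the distance-adaptive algorithm of Section~\ref{sec:distanceAdaptiveComputation} on $S'$ and $T'$ to obtain $d'$; and return $d'+(n-n')+(m-m')$. For the analysis, the preprocessing (computing both \texttt{Parikh vectors} and writing out $S'$ and $T'$) costs $O(n+m+\sigma)$ time and $O(n+m)$ space, while the main call costs $O(d'\min\{n',m'\})$ time and $O(n'+m')\subseteq O(n+m)$ space. Since $d'\le d$, the main term lies within $O(d\min\{n',m'\})$, giving the claimed bound, the additive $O(n+m+\sigma)$ being the same unavoidable input-reading cost already exhibited in the bound $O(n'm'+n+m+\sigma)$ of Section~\ref{sec:InsertDeletePedagogicalExample}.

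The main obstacle is the identity of the second paragraph: one must argue carefully that discarding the string-unique symbols neither shortens the optimal common subsequence (each such symbol is necessarily deleted or inserted in every optimal edit script, hence never matched) nor disturbs the relative order of the surviving symbols, so that the optimal cost splits cleanly into the fixed term $(n-n')+(m-m')$ and the residual $d'$. Everything else is bookkeeping; the substitution $d'\le d$ is precisely what lets the distance-adaptive bound for the shorter instance be re-expressed in terms of the original distance $d$.
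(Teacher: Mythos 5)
Your proposal is correct, and it supplies what the paper itself never writes down: the corollary appears in Section~\ref{sec:CombinationWithOtherAdaptiveTechniques} with no proof at all (it is followed only by a note to add further corollaries), the intended argument being the combination of the distance-adaptive Theorem of Section~\ref{sec:distanceAdaptiveComputation} with the Parikh-projection idea of Section~\ref{sec:InsertDeletePedagogicalExample}. Your realization of that combination is clean and arguably better than the implicit one: instead of arguing that the banded (distance-bounded) dynamic program still works when rows and columns of string-unique symbols are skipped in place, you compose the two techniques as black boxes --- explicitly build $S'$ and $T'$, run the $O(d'\min\{n',m'\})$ algorithm on them, and add back the fixed cost --- with correctness resting on the identity $d_{DI}(S,T)=d_{DI}(S',T')+(n-n')+(m-m')$, proved via $\texttt{LCSS}(S,T)=\texttt{LCSS}(S',T')$ and $d_{DI}(X,Y)=|X|+|Y|-2\,\texttt{LCSS}(X,Y)$ (note your identity is the correct one; the paper's introduction misplaces the factor $2$). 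Two minor points: the additive $O(n+m+\sigma)$ preprocessing you flag is indeed not absorbed by $O(d\min\{n',m'\})$ when $d$ is small, but this is a defect of the corollary's statement (shared by the paper's own bound $O(n'm'+n+m+\sigma)$), not of your proof; and to keep the space within $O(n+m)$ when $\sigma\gg n+m$ you should store only the nonzero Parikh entries (e.g.\ in a dictionary) rather than full length-$\sigma$ vectors.
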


\begin{TODO}
ADD corollaries for the other distances.
\end{TODO}

\begin{PYTHON}
  \section{Python implementation of Adaptive Algorithms}
  \label{sec:pythonImplementation}
  
  \subsection{LCSS or DI-Edit Distance}
  \label{sec:DIExperiment}
  
  \begin{algorithm}[p]
  \caption{Classical algorithm to compute the \textsc{Delete-Insert Edit Distance} between two strings. The implementation is decomposed in two parts:  the \texttt{computation} function initializes the array of values, which is filled recursively by the \texttt{recursive} function. For the sake of space, the documentation strings and unit tests were not included, but the source file including those is available at \url{https://github.com/FineGrainedAnalysis/EditDistance}.
    \label{algo:classicalDI}} 
  \begin{python} 
def classic_di(base, i, j, source, target):
    val = 0
    # Base cases
    if j < 0:
        return i+1
    elif i < 0:
        return j+1
    # Check if the cell has been filled
    # default cell value is -1 
    if base.get(i, j) < 0:
        # Get ith and jth word froum source and target
        if source.get(i) == target.get(j):
            val = classic_di(base, i-1, j-1, source, target)
        else:
            delete_dist = classic_di(base, i-1, j, source, target)
            insert_dist = classic_di(base, i, j-1, source, target)
            val = 1 + min(insert_dist, delete_dist)
        # Change value in the matrix
        base.put(i, j, val)
    return base.get(i, j)
  \end{python}
  \end{algorithm}
  
  \begin{algorithm}[p]
  \caption{Adaptive algorithm to compute the \textsc{Delete-Insert Edit Distance} between two strings. 
    \label{algo:adaptiveDI}}
  \begin{python}
def adaptive_di(base, i, j, source, target):
    val = 0
    # Base cases
    if j < 0:
        return i+1
    elif i < 0:
        return j+1
    # Check if the cell has been filled     
    # default cell value is -1 
    if base.get(i, j) < 0:
        # Get ith and jth word froum source and target
        s = source.get(i)
        t = target.get(j)
        # Calculate rank in S[0:i] of t
        rank_s = source.rank(t, i)
        # Calculate rank in T[0:j] of t
        rank_t = target.rank(s, j)
        if s == t:
            val = adaptive_di(base, i-1, j-1, source, target)
        elif rank_s == 0 and rank_t == 0:
            val = 2 + adaptive_di(base, i-1, j-1, source, target)
        elif rank_s == 0:
            val = 1 + adaptive_di(base, i, j-1, source, target)
        elif rank_t == 0:
            val = 1 + adaptive_di(base, i-1, j, source, target)
        else:
            di_dist = 2 + adaptive_di(base, i-1, j-1, source, target)
            # Calculate select in S[0,rank_s] of t
            select_s = source.select(t, rank_s)
            # Calculate select in T[0,rank_t] of s
            select_t = target.select(s, rank_t)
            delete_dist = i - select_s + \
                adaptive_di(base, select_s-1, j-1, source, target)
            insert_dist = j - select_t + \
                adaptive_di(base, i-1, select_t-1, source, target)
            val = min(insert_dist, delete_dist, di_dist)
        # Change value in the matrix
        base.put(i, j, val)
    return base.get(i, j)
  \end{python}
  \end{algorithm}
  \subsection{DIR-Edit Distance}
  \label{sec:DIRExperiment}
  \begin{algorithm}[p]
  \caption{Classical algorithm to compute the \textsc{Delete-Insert-Replace Edit Distance} between two strings. 
    \label{algo:classicalDIR}}
  \begin{python}
def classic_dir(base, i, j, source, target):
    val = 0
    # Base cases
    if j < 0:
        return i+1
    elif i < 0:
        return j+1
    # Check if the cell has been filled
    # default cell value is -1
    if base.get(i, j) < 0:
        # Get ith and jth word froum source and target
        if source.get(i) == target.get(j):
            val = classic_dir(base, i-1, j-1, source, target)
        else:
            delete_dist = classic_dir(base, i-1, j, source, target)
            insert_dist = classic_dir(base, i, j-1, source, target)
            replace_dist = classic_dir(base, i-1, j-1, source, target)
            val = 1 + min(insert_dist, delete_dist, replace_dist)
        # Change value in the matrix
        base.put(i, j, val)
    return base.get(i, j)
  \end{python}
  \end{algorithm}
  
  \begin{algorithm}[p]
  \caption{Adaptive algorithm to compute the \textsc{Delete-Insert-Replace Edit Distance} between two strings. 
    \label{algo:adaptiveDIR}}
  \begin{python}
def adaptive_dir(base, i, j, source, target):
    val = 0
    # Base case
    if j < 0:
        return i+1
    elif i < 0:
        return j+1
    # Check if the cell has been filled
    # default cell value is -1
    if base.get(i, j) < 0:
        # Get ith and jth word froum source and target
        s = source.get(i)
        t = target.get(j)
        # Calculate rank in S[0:i] of t
        rank_s = source.rank(t, i)
        # Calculate rank in T[0:j] of t
        rank_t = target.rank(s, j)
        replace_dist = 1 + \
            adaptive_dir(base, i-1, j-1, source, target)
        if s == t:
            val = replace_dist-1
        elif rank_s == 0 and rank_t == 0:
            val = replace_dist
        elif rank_s == 0:
            # Calculate select in T[0,rank_t] of s
            select_t = target.select(s, rank_t)
            insert_dist = j - select_t +\
                adaptive_dir(base, i-1, select_t-1, source, target)
            val = min(replace_dist, insert_dist)
        elif rank_t == 0:
            # Calculate select in S[0,rank_s] of t
            select_s = source.select(t, rank_s)
            delete_dist = i - select_s +\
                adaptive_dir(base, select_s-1, j-1, source, target)
            val = min(replace_dist, delete_dist)
        else:
            # Calculate select in S[0,rank_s] of t
            select_s = source.select(t, rank_s)
            # Calculate select in T[0,rank_t] of s
            select_t = target.select(s, rank_t)
            delete_dist = i - select_s +
                adaptive_dir(base, select_s-1, j-1, source, target)
            insert_dist = j - select_t +\
                adaptive_dir(base, i-1, select_t-1, source, target)
            val = min(insert_dist, delete_dist, replace_dist)
        # Change value in the matrix
        base.put(i, j, val)
    return base.get(i, j)
  \end{python}
  \end{algorithm}

  \subsection{DR-Edit Distance}
  \label{sec:DRExperiment}
  \begin{algorithm}[p]
  \caption{Classical algorithm to compute the \textsc{Delete-Replace Edit Distance} between two strings. 
    \label{algo:classicalDR}}
  \begin{python}
def classic_dr(base, i, j, source, target):
    val = 0
    # Base cases
    if i < j:
        return source.len() + target.len() + 1
    if j < 0:
        return i+1
    # Check if the cell has been filled
    # default cell value is -1
    if base.get(i, j) < 0:
        # Get ith and jth word froum source and target
        if source.get(i) == target.get(j):
            val = classic_dr(base, i-1, j-1, source, target)
        else:
            delete_dist = classic_dr(base, i-1, j, source, target)
            replace_dist = classic_dr(base, i-1, j-1, source, target)
            val = 1 + min(delete_dist, replace_dist)
        # Change value in the matrix
        base.put(i, j, val)
    return base.get(i, j)
  \end{python}
  \end{algorithm}

  \begin{algorithm}[p]
  \caption{Adaptive algorithm to compute the \textsc{Delete-Replace Edit Distance} between two strings. 
    \label{algo:adaptiveDR}}
  \begin{python}
def adaptive_dr(base, i, j, source, target):
    val = 0
    # Base case
    if i < j:
        return source.len()+target.len()+1

    elif j < 0:
        return i+1
    # Check if the cell has been filled
    # default cell value is -1
    if base.get(i, j) < 0:
        # Get ith and jth word froum source and target
        s = source.get(i)
        t = target.get(j)
        # Calculate rank in S[0:i] of t
        rank_s = source.rank(t, i)
        replace_dist = 1 + adaptive_dr(
            base, i-1, j-1, source, target)
        if s == t:
            val = replace_dist-1
        elif rank_s == 0:
            val = replace_dist
        else:
            # Calculate select in S[0:rank_s] of t
            select_s = source.select(t, rank_s)
            delete_dist = i - select_s + \
                adaptive_dr(base, select_s-1, j-1, source, target)
            val = min(delete_dist, replace_dist)
        # Change value in the matrix
        base.put(i, j, val)
    return base.get(i, j)
\end{python}
\end{algorithm}

\end{PYTHON}

\end{document}